%%%%%%%%%%%%%%%%%%%%%%%%%%%%%%%%%%%%%%%%%
% Title:  Matrix model generating function for quantum weighted Hurwitz numbers 
% Type: Preprint CRM-xxxx (2019)
% Authors: J. Harnad and B. Runov
% Compiler: LaTeX
%  arXiv:1907.04687 
% Date of latest version: November 21, 2019
%%%%%%%%%%%%%%%% Packages %%%%%%%%%%%%%%%%
\documentclass[12pt]{article}
\usepackage{amsmath}
\usepackage{amssymb}
\usepackage{amsthm}
\usepackage{fullpage}
\usepackage[utf8]{inputenc}
\usepackage{mathtools}
\usepackage{url}
\usepackage[usenames,svgnames]{xcolor}
\usepackage{cite}
\usepackage{enumitem}
\usepackage{comment}
\usepackage{mathdots}
\usepackage{graphicx}
\usepackage{float}
\usepackage{enumitem}

\usepackage[pagebackref=false]{hyperref} %hyperref wants to be loaded last. Set = false if no pagebackref needed
\usepackage[all]{hypcap} %fix hyperref links to tables/figures/etc.
 %for capitalization of \autoref'ed sections
 %for capitalization of \autoref'ed subsections

%%%%%%%%%%%%%%%% Environments and numbering %%%%%%%%%%%%%%%%
\theoremstyle{plain}
\newtheorem{theorem}{Theorem}[section]
\newtheorem{lemma}[theorem]{Lemma}
\newtheorem{definition-theorem}[theorem]{Definition-Theorem}
\newtheorem{definition-proposition}[theorem]{Definition-Proposition}

\newtheorem{example}{Example}[section]
\newtheorem{examples}{Example}[subsection]

\newtheorem{remark}{Remark}[section]

\theoremstyle{definition}
\newtheorem{definition}{Definition}[section]

\numberwithin{equation}{section} % requires package amsthm

%%%%%%%%%%%%%%% New commands, Remarks %%%%%%%%%%%%%%%%%

%%%% font for matrices (mat = "matrix")

\emergencystretch=38pt
%%%%%%%%%%%%%%%% Definitions: operators %%%%%%%%%%%%%%%%
\DeclareMathOperator{\tr}{tr}

\DeclareMathOperator{\Span}{span}

\DeclareMathOperator{\cyc}{cyc}

\DeclareMathOperator{\aut}{aut}

%%%%%%%%%%%%%%%%%  Definitions: abbreviated special symbols %%%%%%%

\def\ra{{\rightarrow}}

\def\tr{\mathrm {tr}}
\def\det{\mathrm {det}}

\def\Nor{\mathrm {Nor}}

\def\ln{\mathrm {ln}}

\def\diag{\mathrm {diag}}

\def\res{\mathop{\mathrm {res}}\limits}

%%%%%%%%%%%%%%%% Definitions: paired delimiters %%%%%%%%%%%%%%%%

%%%%%%%%%%%%%%%% Definitions: abbreviated %%%%%%%%%%%%%%%%
\def\be{\begin{equation}}
\def\ee{\end{equation}}

\def\bea{\begin{eqnarray}}
\def\eea{\end{eqnarray}}

\def\bt{\begin{theorem}}
\def\et{\end{theorem}}

\def\bex{\begin{example}\small \rm}
\def\eex{\end{example}}

\def\bexs{\begin{examples}\small \rm}
\def\eexs{\end{examples}}

\def\ra{\rightarrow}

\def\br{\begin{remark}\small \rm}
\def\er{\end{remark}}

%%%%%%%%%%%%%%%%% Definitions: abbreviated special symbols %%%%%%%

\def\res{\mathop{\mathrm{res}}\limits}

\def\&{&{\hskip -20pt}}

%%%%%%%%%%%%%%%%% Definitions: Marco%%%%%%%
%\def\le{\left}
%\def\ri{\right}

%%%%%%%%%%%%%%%% Definitions: lettering %%%%%%%%%%%%%%%%

\def\CC{\mathcal{C}}
\def\DD{\mathcal{D}}

\def\WW{\mathcal{W}}

\def\Cb{\mathbf{C}}

\def\Ib{\mathbf{I}}

\def\Nb{\mathbf{N}}

\def\Nb{\mathbf{N}}
\def\Pb{\mathbf{P}}

\def\Zb{\mathbf{Z}}

\def\Rbb{\mathbb{R}}

\def\Zbb{\mathbb{Z}}

\def\Rbb{\mathbb{R}}

\def\Zbb{\mathbb{Z}}

\def \d {\mathrm d}

%%%%%%%%%%%%%%%% Hyphenation %%%%%%%%%%%%%%%%
\hyphenation{applying}
\hyphenation{number}
\hyphenation{coverings}
\hyphenation{content}
\hyphenation{matrix}
\hyphenation{product}
\hyphenation{proposition}
\hyphenation{without}
\hyphenation{second}
\hyphenation{Riemann}
\hyphenation{Hurwitz}
\hyphenation{assoc-iated}
\hyphenation{Kazarian}

%%%%%%%%%%%%%%%% Title Page %%%%%%%%%%%%%%%%%%%%%
\begin{document}
\baselineskip 16pt
%\begin{flushright}
%CRM-xxxx(2016)
%\end{flushright}
\medskip
\begin{center}
\begin{Large}\fontfamily{cmss}
\fontsize{17pt}{27pt}
\selectfont
	\textbf{Matrix model generating function for \\ quantum weighted Hurwitz numbers }
	\end{Large}
\\
\bigskip \bigskip
\begin{large}  J. Harnad$^{1, 2}$\footnote {e-mail: harnad@crm.umontreal.ca}  
and B. Runov$^{1,2}$\footnote {e-mail: runov@crm.umontreal.ca}
 \end{large}\\
\bigskip
\begin{small}
$^{1}${\em Department of Mathematics and Statistics, Concordia University\\ 1455 de Maisonneuve Blvd.~W.~Montreal, QC H3G 1M8  Canada}\\
\smallskip
$^{2}${\em Centre de recherches math\'ematiques, Universit\'e de Montr\'eal, \\C.~P.~6128, succ. centre ville, Montr\'eal, QC H3C 3J7  Canada}\\
 \smallskip
$^{3}${\em SISSA/ISAS, via Bonomea 265, Trieste, Italy }
\smallskip
\end{small}
\end{center}

\begin{small}\begin{center}
\today
\end{center}\end{small}
\medskip
%%%%%%%%%%%%%%%%  Abstract  %%%%%%%%%%%%%%%%
\begin{abstract}
The  KP  $\tau$-function of hypergeometric type serving as generating function for quantum weighted Hurwitz numbers is
used to compute the Baker function and the corresponding adapted basis elements, expressed  as absolutely convergent Laurent series
in the spectral parameter. These  are equivalently expressed as Mellin-Barnes integrals, analogously to Meijer $G$-functions,
but with an infinite product of $\Gamma$-functions as integral kernel.  A matrix model representation is derived for the $\tau$-function 
evaluated  at trace invariants of an externally coupled matrix.
\end{abstract}
%\break

%%%%%%%%%%%%%%%  Section 1.Introduction %%%%%%%%%%%
\section{Introduction}

Hurwitz numbers enumerate branched coverings of the Riemann sphere, with specified ramification profiles
at the branch points, and have been studied since the pioneering works of Hurwitz \cite{Hu1, Hu2}. 
Equivalently, they may be viewed as combinatorial invariants enumerating factorization of elements in the symmetric group $S_n$
as products of elements in specified conjugacy classes \cite{Frob1, Frob2}.

There has been considerable interest in recent years in making use of $\tau$-functions, which are dynamical generating functions
for solutions of classical integrable hierarchies, such as the KP or 2D Toda hierarchies, rather 
as generating functions in the combinatorial sense, for various types of enumerative geometrical and
topological invariants related to Riemann surfaces. 
Pandharipande and Okounkov \cite{Ok, OP}  showed that  special cases
of $\tau$-functions of  {\em hypegeometric} type \cite{OrSc1, OrSc2} may be viewed as generating functions 
for {\em simple} Hurwitz numbers (which enumerate branched coverings in which all but one, or two, of the branch
points have simple ramification profiles.)
Several other instances of  $\tau$-functions of hypergeometric type were shown to serve as as generating functions for
{\em weighted} single or double Hurwitz numbers $H^d_G(\mu)$ , $H^d_G(\mu, \nu)$  \cite{GH1, GH2, H1, HO}.
In many cases representations of such $\tau$-functions as random matrix integrals for various classes of measures
were found. These include: {\em simple} Hurwitz numbers \cite{OP, Ok, MM, BM, BEMS};
{\em weakly monotonic} Hurwitz numbers\cite{GGN1, HC}; {\em strongly monotonic} Hurwitz numbers and, more generally,
polynomially or rationally weighted Hurwitz numbers \cite{MM, AMMN, BEMS, AC1, AC2, AC3, BHR, Z, KZ}. 

Here, we derive a new matrix model representation for the KP $\tau$-function $\tau^{(H_q,\beta)}\left({\bf t}\right)$  generating 
 {\em quantum weighted} Hurwitz single numbers $H^d_{H_q}(\mu)$  \cite{GH2, H1, H2}.
 The main result (Theorem \ref{thm_matrix_integral_tau_H_q}) is that, when the flow parameters ${\bf t} = (t_1, t_2, \cdots, )$ are set equal to the trace invariants
\be
t_i = {1\over i}\tr(X^i) = {1\over i} \sum_{a-1}^nx_a^i, 
\ee
of a given $n\times n$ matrix $X$, with eigenvalues $(x_1, \dots, x_n)$, denoted $[X]$, the $\tau$-function
 $\tau^{(H_q,\beta)} (\big[ X \big])$ is expressible as the product 
\be
\tau^{(H_q,\beta)} (\big[ X \big])= {\beta^{{1\over 2}n(n-1)} (\prod_{i=1}^n x_i^{n-1})\Delta(\ln(x)) \over (\prod_{i=1}^n i!) \Delta(x) } \Zb_{d\mu_q}(\ln(X))
\ee
of a simple explicit Vandermonde determinantal  factor depending on the $x_a$'s and a generalized Br\'ezin-Hikami \cite{BrH} matrix integral
of the form
\be
\Zb_{d\mu_q}(Y) =\int_{M \in \Nor^{n\times n}_{\CC_n}}d\mu_q(M) e^{\tr Y M}.
\ee
Here
\be
d\mu_q(M) = \d\mu_0(M) \det(A_{H_q,n}(M))
\ee
is a conjugation invariant measure on the space $\Nor^{n\times n}_{\CC_n}$ of $n \times n$ normal matrices 
with eigenvalues $\zeta_i \in \CC_n$ supported on a contour $\CC_n$ in the complex plane, surrounding
simple poles of the integrand at the integers $-n, -n+1, \cdots$, $d\mu_0(M)$ is the Lebesgue
 measure on $\Nor^{n\times n}_{\CC_n}$ and $A_{H_q,n}(z))$ is a convergent infinite product of Euler
 $\Gamma$-functions, as defined in eq.(\ref{AH_qk}) of Section  \ref{mellin_integral}.

In Section \ref{gen_fns_quantum_Hurwitz}, the definition of weighted Hurwitz numbers, as
introduced in \cite{GH1,GH2,H1, HO}, is recalled and the KP $\tau$-function that serves as generating function
for these is defined, focussing on the case $\tau^{(H_q, \beta)}$ of quantum weighted generating functions. 
Following \cite{ACEH1, ACEH2, ACEH3},  we also define a natural basis $\{\phi^{(H_q, \beta)}_i\}_{i\in \Nb^+}$ for the element $W^{(H_q, \beta)}$ 
of the infinite Sato-Segal-Wilson Grassmannian  corresponding to this $\tau$-function, expressed as convergent
Laurent series, and the recursion operators relating them.

 In Section \ref{mellin_integral},  we derive a Mellin-Barnes integral representation for the $\phi^{(H_q,\beta)}_i$'s, analogous 
 to the one for the Meijer $G$-functions that appear in the case of rationally weight generating functions \cite{BHR}, but with integral kernels consisting of 
convergent infinite products of $\Gamma$-functions depending on the quantum parameter $q$.
  In Section \ref{tau_trace_invars}, the recursions  relating the $\phi^{(H_q,\beta)}_i$'s are applied to the finite determinantal expression arising
  when the KP flow parameters are equated to the trace invariants of a finite dimensional matrix, giving a Wronskian form for the
  $\tau$-function. This serves, in Section \ref{MM_tau}, as the link to expressing it as a matrix integral of 
  generalized Br\'ezin-Hikami \cite{BrH} type.  
 
%%%%%%%%=%%%%  Section 2. Generating functions for quantum Hurwitz numbers %%%%%%%%%%%
\section{Generating functions for quantum Hurwitz numbers}
\label{gen_fns_quantum_Hurwitz}

%%%%%%%%%%%%  Subsection 2.1.Quantum Hurwitz numbers %%%%
\subsection{Weighted Hurwitz numbers and weight generating functions}
\label{gen_fns_weighted_Hurwitz}

We recall the definition of {\em pure} Hurwitz numbers \cite{Frob1, Frob2, Hu1, Hu2, LZ} and {\em weighted} Hurwitz numbers \cite{GH1, GH2, HO, H1}. 
\begin{definition}[Combinatorial]
For a  set of $k$ partitions $\{\mu^{i)}\}_{i=1,\dots, k}$ of $N\in \Nb^+$, the {\em pure} Hurwitz number 
$H(\mu^{(1)}, \dots, \mu^{(k)})$  is ${1\over N!}$ times the number of distinct ways that the identity element $\Ib_N \in S_N$ in the symmetric group in $N$ elements can be expressed as a product
\be
\Ib_N = h_1, \cdots h_k
\ee
of $k$ elements $\{h_i\in S_N\}_{i=1, \dots, k}$ such that for each $i$, $h_i$ belongs
to the conjugacy class $\cyc(\mu^{(i)})$ whose cycle lengths are equal to the parts of $\mu^{(i)}$:
\be
h_i \in \cyc(\mu^{(i)}), \quad i =1, \dots, k.
\ee
\end{definition}
An equivalent definition involves the enumeration of branched coverings of the Riemann sphere.
\begin{definition}[Geometric]
For a set of partitions $\{\mu^{(i)} \}_{i=1,\dots, k}$ of weight $|\mu^{(i)}|=N$, 
the pure Hurwitz number  $H(\mu^{(1)}, \dots, \mu^{(k)})$ is defined geometrically  as the number
of inequivalent $N$-fold branched coverings  $\CC \ra \Pb^1$  of the Riemann sphere with $k$ branch points $(Q^{(1)}, \dots, Q^{(k)})$, 
whose ramification profiles are given by the partitions $\{\mu^{(1)}, \dots, \mu^{(k)}\}$, 
normalized by the inverse  $1/|\aut (\CC)|$ of the order of the automorphism group of the covering. 
\end{definition}
The equivalence of the two follows from  the monodromy homomorphism from the fundamental group of 
$\Pb^1/\{Q^{(1)}, \dots, Q^{(k)}\}$, the Riemann sphere punctured at the branch points, into $S_N$,
 obtained by lifting closed loops from the base to the covering \cite{LZ}  .

 To define {\em weighted} Hurwitz numbers \cite{GH1, GH2, HO, H1}, we introduce a weight generating function $G(z)$, either as 
an infinite product
\be
G(z) =\prod_{i=1}^\infty (1 + c_i z)
\label{G_z_prod}
\ee
or an infinite sum
\be
G(z) = 1 + \sum_{i=1}^\infty g_i z^i,
\label{G_z_taylor}
\ee
either formally, or under suitable convergence conditions imposed upon the parameters.
Alternatively, it may be chosen in the dual form
\be
\tilde{G}(z) = \prod_{i=1}^\infty (1-c_i z)^{-1},
\label{G_tilde_z_prod}
\ee
which may also be developed as an infinite sum, 
\be
\tilde{G}(z)= 1+ \sum_{i=1}^\infty \tilde{g}_i z^i.
\label{G_tilde_z_taylor}
\ee

The independent parameters determining the weighting may be viewed as either the Taylor coefficients $\{g_i\}_{i\in \Nb^+}$,
$\{\tilde{g}_i\}_{i\in \Nb^+}$ or the parameters $\{c_i\}_{i\in \Nb^+}$ appearing in the infinite product formulae (\ref{G_z_prod}), (\ref{G_tilde_z_prod}).
They are related by the fact that (\ref{G_z_prod}) and (\ref{G_tilde_z_prod}) are generating functions for 
elementary and complete symmetric functions, respectively,
\be
g_i = e_i({\bf c}), \quad \tilde{g}_i = h_i({\bf c}),
\ee
in the parameters ${\bf c} =(c_1, c_2, \dots)$.

\begin{definition}[Weighted Hurwitz numbers]
For the case of weight generating functions of the form (\ref{G_z_prod}), choose a nonnegative integer $d$ and a fixed partition $\mu$
of weight  $|\mu| =N$. The weighted (single) Hurwitz number $H^d_G(\mu)$ is then defined \cite{GH2, HO} as the weighted sum over 
all $k$-tuples $(\mu^{(1)}, \dots, \mu^{(k)})$
\be
H^d_G(\mu) := 
\sum_{k=1}^d \sum_{\mu^{(1)}, \dots \mu^{(k)}, \  |\mu^{(i)}| =N  \atop \sum_{i=1}^k \ell^*(\mu^{(i)}) =d} 
\WW_G(\mu^{(1)}, \dots, \mu^{(k)}) H(\mu^{(1)}, \dots, \mu^{k)}, \mu)
\label{H_d_G_def}
\ee
where
\be
\ell^*(\mu^{(i)}) := |\mu^{(i)}| - \ell(\mu^{(i)})
\ee
is the {\em colength} of the partition $\mu^{(i)}$, and the weight factor is defined to be
\bea
 \WW_G(\mu^{(1)}, \dots, \mu^{(k)}) &\&:=
 {1\over k!}
 \sum_{\sigma \in S_{k}} 
 \sum_{1 \leq b_1 < \cdots < b_{k }} 
  c_{b_{\sigma(1)}}^{\ell^*(\mu^{(1)})} \cdots c_{b_{\sigma(k)}}^{\ell^*(\mu^{(k)})} ,
   \label{WG_def}
\eea
which, up to a normalization factor, is the {\em monomial symmetric function}  $m_\lambda({\bf c})$  \cite{Mac}
in the variables ${\bf c}= (c_1, c_2, \dots, )$ corresponding to the partitions $\lambda$ of length $k$
and weight $d =\sum_{i=1}^k \ell^*(\mu^{(i)})$ whose parts are equal to the colengths $\{ \ell^*(\mu^{(i)})\}_{i=1, \dots, k}$.

For the case of dual generating functions of the form (\ref{G_tilde_z_prod}),  the weighted 
(single) Hurwitz number $H^d_{\tilde{G}}(\mu)$ is defined  \cite{GH2, HO}  as the weighted sum 
\be
H^d_{\tilde{G}}(\mu) := 
\sum_{k=1}^d \sum_{\mu^{(1)}, \dots \mu^{(k)}, \  |\mu^{(i)}| =N  \atop \sum_{i=1}^k \ell^*(\mu^{(i)}) =d} 
\widetilde{\WW}_{\tilde{G}}(\mu^{(1)}, \dots, \mu^{(k)}) H(\mu^{(1)}, \dots, \mu^{k)}, \mu),
\label{H_d_G_tilde_def}
\ee
where the weight factor is defined as
\bea
\widetilde{\WW}_{\tilde{G}}(\mu^{(1)}, \dots, \mu^{(k)}) &\&:=
 {(-1)^{k+\sum_{i=1}^k\ell^*(\mu^{(i)})}\over k!}
 \sum_{\sigma \in S_{k}} 
 \sum_{1 \le b_1 \le \cdots \le b_k } 
  c_{b_{\sigma(1)}}^{\ell^*(\mu^{(1)})} \cdots c_{b_{\sigma(k)}}^{\ell^*(\mu^{(k)})},
   \label{WG_tilde_def}
\eea
which, again up to a normalization factor, is the {\em forgotten symmetric function} $f_\lambda({\bf c})$ \cite{Mac}
in the variables ${\bf c} = (c_1, c_2, \dots, )$ corresponding to the partition $\lambda$ of length $k$
and weight $d =\sum_{i=1}^k \ell^*(\mu^{(i)})$  with parts equal to the colengths $\{ \ell^*(\mu^{(i)})\}_{i=1, \dots, k}$.

\end{definition}

 %%%%%%%%%%%%%%%%  SubSection 2.2 Quantum weighted Hurwitz numbers %%%%%%%%%%%
\subsection{Quantum weighted Hurwitz numbers}

In the following, we  consider a variant of the case of {\em quantum} weighted Hurwitz numbers \cite{H1, GH2}, 
with weight generating function of dual type (\ref{G_tilde_z_prod}), with the constants $c_i$  chosen to be
\be
c_i = q^{i-1}, \quad i=1, 2, \dots.
\ee
for some parameter $q$ with $0 <|q|<1$, so that
\be
\tilde{G}(z) = H_q(z) := \prod_{i=0}^\infty(1 - q^i z)^{-1} = \sum_{n=0}^\infty {z^n \over (q; q)_n},
\label{H_q_z}
\ee
where
\be
(a ; q)_n := (1-a)(1-aq) \cdots (1-aq^{n-1})
\ee
is the $q$-Pochhammer symbol. 
\begin{remark}
This may be viewed as the scaled $q$-exponential function \cite{Andrews}, since
\be
\lim_{q\ra 1} H_q((1-q)z) = e^z. 
\ee
\end{remark}

Specializing (\ref{H_d_G_tilde_def}), ({\ref{WG_tilde_def}) to this case gives:
\begin{definition}[Quantum weighted Hurwitz numbers]
For weight generating function  (\ref{H_q_z}),  choosing a nonnegative integer $d$ 
and a fixed partition $\mu$ of $N$, the quantum weighted (single) Hurwitz number $H^d_{H_q}(\mu)$ is
 the weighted sum over all $k$-tuples $(\mu^{(1)}, \dots, \mu^{(k)})$
\be
H^d_{H_q}(\mu) := 
\sum_{k=1}^d \sum_{\mu^(1), \dots \mu^{(k)}, \  |\mu^{(i)}| =N  \atop \sum_{i=1}^k \ell^*(\mu^{(i)}) =d} 
\widetilde{\WW}_{H_q}(\mu^{(1)}, \dots, \mu^{(k)}) H(\mu^{(1)}, \dots, \mu^{k)}, \mu)
\label{H_d_G_def}
\ee
where
\be
\ell^*(\mu^{(i)}) := |\mu^{(i)}| - \ell(\mu^{(i)})
\ee
is the {\em colength} of the partition $\mu^{(i)}$, and the weight factor is 
\bea
\widetilde{\WW}_{H_q}(\mu^{(1)}, \dots, \mu^{(k)})&\&:=
 {(-1)^{d-k}\over k!}
 \sum_{\sigma \in S_{k}} 
\sum_{1 \le b_1 \le  \cdots \le  b_{k } } 
  q^{(b_{\sigma(1)}-1)\ell^*(\mu^{(1)})} \cdots q^{(b_{\sigma(k)}-1)\ell^*(\mu^{(k)})}  \cr
  &\&=  {(-1)^{d-k}\over k!} \sum_{\sigma \in S_k} \prod_{j=1}^k  {1  \over (1 - q^{\sum_{i=1}^j\ell^*(\mu^{(\sigma(i))}})}.
   \label{WG_def}
\eea
\end{definition}

\begin{remark}[Riemann-Hurwitz formula]
If $d$ is defined  as in the constraint on the weighted sums (\ref{H_d_G_def}),
 \be
d = \sum_{i=1}^k \ell^*(\mu^{(i)}),
\ee
the {\em Riemann-Hurwitz formula}  gives the Euler characteristic $\chi$ of the branched cover
\hbox{$\CC\ra \Pb^1$ with $k+1$} branch points $(Q^{(1)}, \dots, Q^{(k)}, Q)$  with ramification profiles 
$(\mu^{1)}, \dots, \mu^{(k)}, \mu)$ as
 \be
\chi = N +\ell(\mu) - d.
\label{Riemann_Hurwitz}
\ee
If $\CC$ is connected and $g$ is the genus of $\CC$, we have
\be
\chi = 2 - 2g.
\ee
\end{remark}

%%%%%%%%%%%%%%%%  Subsection 2.3. Hypergeometric $\tau$ functions %%%%%%%%%%%
\subsection{Hypergeometric $\tau$-functions}

We next recall the definition \cite{GH2, HO, H1} of the KP $\tau$-function of hypergeometric type \cite{OrSc1, OrSc2} 
that serves as generating function for the quantum  weighted Hurwitz numbers $H^d_{H_q}(\mu, \nu)$.
For the weight generating function $H_q(z)$ and a nonzero small parameter $\beta$,
we define two doubly infinite sequences of numbers $\{r_i^{(H_q, \beta)}, \rho_i\}_{i \in \Zb}$,
labeled by the integers
\bea
r^{(H_q, \beta)}_i &\&:= \beta H_q( i\beta), \quad i \in \Zb,  \quad \rho_0 =1,
\label{r_G_beta_i_def} \\
\rho^{(H_q, \beta)}_i &\& := \prod_{j=1}^i r^{(H_q, \beta)}_j,\quad  \rho^{(H_q, \beta)}_{-i}:=\prod_{j=0}^{i-1}( r^{(H_q, \beta)}_{-j})^{-1}, 
\quad i\in \Nb^+,
\label{rho_i_def}
\eea
related by
\be
r_i^{(H_q, \beta)} = {\rho^{(H_q, \beta)}_i \over \rho^{(H_q, \beta)}_{i-1}},
\ee
where $\beta$ is chosen  such  that the $H_q(i\beta)$ does not vanish for
any integer $i\in  \Zb$. 
For each partition $\lambda$ of $N$, we define the associated {\em content product} coefficient \cite{GH2, HO, H1}
\be
r^{(H_q, \beta)}_\lambda := \prod_{(i,j) \in \lambda} r^{(H_q, \beta)}_{j-i}.
\ee

The KP $\tau$-function of hypergeometric type associated to these parameters is then defined as the Schur function series
\be
\tau^{(H_q, \beta)}({\bf t}) 
:=\sum_{N=0}^\infty \sum_{{\lambda}, \ |\lambda|=N}  
{d_{\lambda}\over N!} r_\lambda^{(H_q, \beta)}  s_\lambda({\bf t}),
\label{tau_G_beta_schur_series}
\ee
where $d_\lambda$ is the dimension of the irreducible representation of the symmetric group $S_N$ 
and ${\bf t}=(t_1, t_2 \dots)$ are the infinite sequence of KP flow parameters,  equated to the 
infinite sequence  of normalized power sum symmetric functions $(p_1, {p_2\over 2}, \dots)$ \cite{Mac}
in some auxiliary  infinite sequence of variables $(x_1, x_2, \dots )$.

We then use the Schur character formula \cite{FH, Mac}
  \be
  s_\lambda = \sum_{\mu, \ |\mu|=|\lambda|} {\chi_\lambda(\mu) \over z_\mu} p_\mu,
  \label{schur_character_formula}
  \ee
  where $\chi_\lambda(\mu)$ is the irreducible character of the $S_N$ representation 
  determined by $\lambda$ evaluated on the conjugacy class $\cyc(\mu)$ consisting of elements with cycle lengths
  equal to the parts of $\mu$, and 
  \be
  z_{\mu} = \prod_{i=1}^{\ell(\mu)} m_i(\mu)! i^{m_i(\mu)}
  \label{mu_stabilizer_order}
  \ee 
  %%%
  is the order of the stabilizer of the elements of this conjugacy class,  to re-express the Schur function
   series (\ref{tau_G_beta_schur_series}) as an expansion  in the basis $\{p_\mu\}$ 
  of power sum symmetric functions, where
\be
  p_\mu:= \prod_{i=1}^{\ell(\mu)} p_{\mu_i}, \quad p_j = j t_j, \ j\in \Nb.
 \ee
   \begin{theorem}[\cite{GH1, GH2, HO, H1, H2}]
  The $\tau$-function $\tau^{H_q, \beta}({\bf t}) $ may equivalently be expressed as
 \be
\tau^{(H_q, \beta)}({\bf t}) 
=\sum_{\mu, \, |\mu| =d}\sum_{d=0}^\infty  \beta^d H^d_{H_q} (\mu) p_\mu({\bf t}).
 \label{tau_G_beta_power_sum_series}
 \ee
and is thus a generating function for the weighted Hurwitz numbers $H^d_{H_q} (\mu) $.
    \end{theorem}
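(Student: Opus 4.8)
The plan is to show the Schur expansion \ref{tau_G_beta_schur_series} and the power-sum expansion \ref{tau_G_beta_power_sum_series} agree coefficient-by-coefficient in the basis $\{p_\mu\}$, and to reduce the whole statement to one scalar identity for the content-product coefficients $r_\lambda^{(H_q,\beta)}$. First I would substitute the Schur character formula \ref{schur_character_formula} into \ref{tau_G_beta_schur_series} and write $d_\lambda/N! = \chi_\lambda((1^N))/z_{(1^N)}$. After interchanging the (for $0<|q|<1$, absolutely convergent) sums, the coefficient of $p_\mu$ in $\tau^{(H_q,\beta)}$ becomes $\sum_{\lambda,\,|\lambda|=|\mu|}\frac{d_\lambda}{|\mu|!}\frac{\chi_\lambda(\mu)}{z_\mu}r_\lambda^{(H_q,\beta)}$. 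Matching against \ref{tau_G_beta_power_sum_series} reduces the theorem to proving, for each fixed $\mu$ with $|\mu|=N$,
\[
\sum_{\lambda,\,|\lambda|=N}\frac{d_\lambda}{N!}\frac{\chi_\lambda(\mu)}{z_\mu}r_\lambda^{(H_q,\beta)} = \sum_{d=0}^\infty\beta^d H^d_{H_q}(\mu).
\]

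Next I would expand the right side using the combinatorial definition of $H^d_{H_q}(\mu)$ together with the Frobenius character formula for pure Hurwitz numbers. Writing $|\cyc(\nu)|=N!/z_\nu$ and applying Frobenius to the $(k+1)$ factors $(\mu^{(1)},\dots,\mu^{(k)},\mu)$ gives
\[
H(\mu^{(1)},\dots,\mu^{(k)},\mu) = \sum_{\lambda,\,|\lambda|=N}\frac{d_\lambda}{N!}\frac{\chi_\lambda(\mu)}{z_\mu}\prod_{i=1}^k\frac{|\cyc(\mu^{(i)})|\,\chi_\lambda(\mu^{(i)})}{d_\lambda}.
\]
Substituting this into $\sum_d\beta^dH^d_{H_q}(\mu)$ and interchanging summation with the $\lambda$-sum, the common factor $\frac{d_\lambda}{N!}\frac{\chi_\lambda(\mu)}{z_\mu}$ pulls out, and the theorem collapses to a single content-product identity, to be checked for every $\lambda$ of $N$:
\[
r_\lambda^{(H_q,\beta)} = \sum_{d=0}^\infty\beta^d\sum_{k=1}^d\;\sum_{\substack{\mu^{(1)},\dots,\mu^{(k)}\\ \sum_i\ell^*(\mu^{(i)})=d}}\widetilde{\WW}_{H_q}(\mu^{(1)},\dots,\mu^{(k)})\prod_{i=1}^k\frac{|\cyc(\mu^{(i)})|\,\chi_\lambda(\mu^{(i)})}{d_\lambda}.
\]

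To prove this identity I would pass to the center of $\Cbb[S_N]$ via the Jucys--Murphy elements $\JJ_a := \sum_{b<a}(b\,a)$. On the Gelfand--Tsetlin basis of the irreducible module $V_\lambda$ each $\JJ_a$ acts by the content $j-i$ of the box occupied by $a$, so the symmetric (hence central) element $\prod_{a=1}^N\beta H_q(\beta\JJ_a)$ acts by the scalar $\prod_{(i,j)\in\lambda}\beta H_q((j-i)\beta)=r_\lambda^{(H_q,\beta)}$. The right side of the identity is precisely the central character, evaluated on $V_\lambda$, of $\sum_{d,k,\dots}\beta^d\,\widetilde{\WW}_{H_q}\prod_i\CC_{\mu^{(i)}}$, where $\CC_\nu$ is the class sum of $\cyc(\nu)$, since $\CC_\nu$ acts by $|\cyc(\nu)|\chi_\lambda(\nu)/d_\lambda$. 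Thus it suffices to show the two central elements coincide. I would do this using Jucys's theorem $e_k(\JJ_1,\dots,\JJ_N)=\sum_{\ell^*(\nu)=k}\CC_\nu$, which converts elementary symmetric functions of the $\JJ_a$ into class sums graded by colength; expanding the dual product $H_q(z)=\prod_{m\ge0}(1-q^mz)^{-1}$ and collecting by total colength $d$ then reproduces the weight factor $\widetilde{\WW}_{H_q}$ of \ref{WG_def}, sign $(-1)^{d-k}$ and $1/k!$ symmetrization included.

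The main obstacle is this last combinatorial matching. Because $H_q$ is a dual (product-of-inverses) generating function, passing from $\prod_a H_q(\beta\JJ_a)$ to the class-sum basis amounts to inverting Jucys's elementary-symmetric generating series in the center, which is exactly what produces the forgotten-symmetric-function structure and the alternating signs recorded in $\widetilde{\WW}_{H_q}$; tracking the $q$-Pochhammer coefficients $1/(q;q)_n$ and verifying they assemble into the monomials $q^{(b_{\sigma(i)}-1)\ell^*(\mu^{(i)})}$ of \ref{WG_def} is the delicate bookkeeping. Two subsidiary points must also be verified: the interchanges of summation are legitimate because $0<|q|<1$ makes the content product and all series absolutely convergent, and the specialization $c_i=q^{i-1}$ of the general dual weight factor \ref{WG_tilde_def} indeed yields \ref{WG_def}, so that the general dual-type result of the cited references specializes to the quantum case.
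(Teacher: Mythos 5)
The paper itself contains no proof of this theorem: it is imported verbatim, with citations, from \cite{GH1, GH2, HO, H1, H2}, so your proposal can only be judged against the arguments in those references. Your route is essentially theirs: extract the coefficient of $p_\mu$ from the Schur expansion (\ref{tau_G_beta_schur_series}) via the character formula, convert the pure Hurwitz numbers into character sums by the Frobenius formula (your version, using $|\cyc(\nu)| = N!/z_\nu$, is correct), and reduce the whole statement to a single identity in the center of $\Cbb[S_N]$, proved by letting symmetric functions of the Jucys--Murphy elements act on $V_\lambda$ (contents as eigenvalues) and converting to class sums via Jucys's theorem $e_k(\JJ_1,\dots,\JJ_N)=\sum_{\ell^*(\nu)=k}\CC_\nu$. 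Your observation that inverting this elementary-symmetric series is what generates the signs $(-1)^{d-k}$ and the weakly increasing $b_i$ sums of $\widetilde{\WW}_{H_q}$ is exactly the right mechanism: it amounts to the identity $h_d=\sum_{k}(-1)^{d-k}\sum_{j_1+\cdots+j_k=d,\,j_i\ge 1}e_{j_1}\cdots e_{j_k}$ applied factor by factor in $\prod_{m\ge 0}\prod_a(1-q^m\beta\JJ_a)^{-1}$.

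There is, however, one concrete defect you will hit when you carry out the ``delicate bookkeeping'': a $\beta$-normalization mismatch. With the paper's definition (\ref{r_G_beta_i_def}), $r^{(H_q,\beta)}_i=\beta H_q(i\beta)$, so $r^{(H_q,\beta)}_\lambda$ carries an overall factor $\beta^{|\lambda|}$. But the central element whose class-sum expansion equals $\sum_d\beta^d\sum_{k,\vec{\mu}}\widetilde{\WW}_{H_q}(\vec{\mu})\prod_i\CC_{\mu^{(i)}}$ is $\prod_{a=1}^{N}H_q(\beta\JJ_a)$, \emph{without} the $\beta$ prefactors; your element $\prod_{a=1}^{N}\beta H_q(\beta\JJ_a)$ differs from it by $\beta^{N}$, so the scalar identity to which you reduce the theorem is false as written. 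A one-line check: at $N=1$ the Schur side of (\ref{tau_G_beta_schur_series}) gives coefficient $r_0=\beta$ for $t_1$, while the right side of (\ref{tau_G_beta_power_sum_series}) gives $\beta^0 H^0_{H_q}((1))=1$. The expansion consistent with (\ref{r_G_beta_i_def}) is $\sum_{\mu}\sum_d\beta^{\,d+|\mu|}H^d_{H_q}(\mu)\,p_\mu({\bf t})$, i.e.\ the theorem holds verbatim only if the $\beta$ prefactor is dropped from $r_i$. This is an inconsistency of conventions in the paper rather than a flaw in your strategy, but a complete write-up must either prove the corrected identity or flag the normalization explicitly, since otherwise the final step of your argument cannot close.
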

    
  %%%%%%%%%%%%%%%%  Subsection 2.4. Baker functions and adapted basis %%%%%%%%%%%
\subsection{The (dual) Baker function and adapted basis}

By Sato's formula \cite{Sa, SS, SW}, the dual Baker-Akhiezer function $\Psi^*_{(G, \beta)}(z, {\bf t})$ corresponding
to the KP $\tau$-function eq.~(\ref{tau_G_beta_schur_series}) is
\be
\Psi^*_{(H_q, \beta)}(z, {\bf t}) = e^{-\sum_{i=1}^\infty t_iz^i} 
{\tau^{(H_q, \beta)}({\bf t} +[z^{-1}]) \over \tau^{(H_q, \beta)}({\bf t}) },
\ee
where
\be
[z^{-1}] := \left({1\over z}, {1\over 2 z^2}, \dots, {1\over n z^n}, \dots\right).
\ee
Evaluating at ${\bf t } ={\bf 0}$ and setting
\be
x := {1\over z},
\ee
we define
\be
\phi^{(H_q, \beta)}_1(x) := \Psi^*_{H_q, \beta}(1/z, {\bf 0}) = \tau^{(H_q, \beta)}([x]). 
\label{phi1_tau}
\ee
More generally, following \cite{ACEH1, ACEH2, ACEH3}, we introduce a sequence of  functions $\{\phi_k(x)\}_{k\in \Nb}$,
defined as  contour integrals around a loop centred at  the origin (or as formal residues)
\be
\phi_k^{(H_q, \beta)}(x) = {\beta\over 2\pi i  x^{k-1}} \oint_{\zeta=0} \rho^{(H_q, \beta)}(\zeta) e^{\beta^{-1} x\zeta } {d\zeta\over \zeta^k},
\label{phi_k_fourier_rep}
\ee
where $\rho^{(H_q, \beta)}(\zeta)$ is the Fourier series
\be
\rho^{(H_q, \beta)}(\zeta) := \sum_{i\in \Zb} \rho^{(H_q, \beta)}_i \zeta^{-i-1},
\label{rho_fourier}
\ee
with the $\rho^{(H_q, \beta)}_i$'s given by eqs.~(\ref{r_G_beta_i_def}), (\ref{rho_i_def}).
Then $\{\phi^{(H_q, \beta)}_k(1/z)\}_{k\in \Nb^+}$ forms a basis for the element $W^{(H_q,\beta)}$ of the infinite 
Sato-Segal-Wilson Grassmannian corresponding to the $\tau$-function $\tau^{(H_q,\beta)}({\bf t})$.

 The $\phi^{(H_q, \beta)}_k$'s may alternatively be expanded as Laurent series by evaluating the integrals as a sum of residues at the origin, 
\be
\phi^{(H_q, \beta)}_{k}(x)  = \beta x^{1-k}\sum_{j= 0}^{\infty} 
{ \rho^{(H_q, \beta)}_{j-k}\over j!} \left({x \over\beta}\right)^j.
\label{phi_k_series}
\ee
It follows that  these satisfy the recursive sequence of equations
\be
\beta (\DD +k -1) \phi^{(H_q, \beta)}_k= \phi^{(H_q, \beta)}_{k-1}, \quad k \in \Zb,
\label{D_phi_k}
\ee

We then have the following results regarding convergence of the Taylor series (\ref{H_q_z}) and (\ref{phi_k_series}) and
the asymptotic form of $H_q(z)$ for large $|z|$,  which are all proved in Appendix \ref{app_A}.

\begin{lemma} 
\label{HqTaylorLemma}
For any $\beta<0$, $j\ge 1$ the radius of convergence of the Taylor series of the function $\log H_q(\beta(j+z))$ is greater than $\frac{1}{2}$.
\end{lemma}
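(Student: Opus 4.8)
The plan is to analyze the singularities of $\log H_q(\beta(j+z))$ as a function of the complex variable $z$ and locate the nearest one to $z=0$, since the radius of convergence of the Taylor series about $z=0$ equals the distance to that nearest singularity. Recall from eq.~(\ref{H_q_z}) that $H_q(w) = \prod_{i=0}^\infty (1-q^i w)^{-1}$, so that
\be
\log H_q(\beta(j+z)) = -\sum_{i=0}^\infty \log\bigl(1 - q^i\beta(j+z)\bigr).
\ee
The singularities of this function are precisely the branch points where one of the factors $1 - q^i\beta(j+z)$ vanishes, i.e.\ at the points $z$ solving $q^i\beta(j+z)=1$ for some $i\ge 0$, namely
\be
z = \frac{q^{-i}}{\beta} - j, \quad i = 0,1,2,\dots.
\ee
Since $\beta<0$ and $0<|q|<1$, I would first treat the natural case $q\in(0,1)$ real: then $q^{-i}/\beta$ is negative and decreasing in $i$, so the singularities all lie on the negative real axis at $z_i = q^{-i}/\beta - j$, the closest to the origin being $z_0 = 1/\beta - j$ (taking $i=0$, since $q^{-i}\ge 1$ pushes the other points further left). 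Thus the distance from $0$ to the nearest singularity is $|1/\beta - j| = j + 1/|\beta|$ (using $\beta<0$ and $j\ge1$, so $1/\beta - j<0$).

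The key computation is then simply the bound
\be
j + \frac{1}{|\beta|} \ge j \ge 1 > \frac{1}{2},
\ee
which establishes the claim with a great deal of room to spare; in fact the radius of convergence is at least $1$, not merely greater than $\frac12$. First I would verify that no singularity is closer: because each $q^{-i}/\beta$ is a negative real number of modulus $\ge 1/|\beta|$, every singularity $z_i$ has real part $\le 1/\beta - j<0$ and hence modulus $\ge j+1/|\beta|$, confirming $z_0$ is nearest. I should also confirm that $z=0$ itself is not a singularity, i.e.\ that the stated hypothesis (implicit from eq.~(\ref{r_G_beta_i_def})) that $H_q(i\beta)\ne0$ and more precisely $H_q(j\beta)$ is finite holds, which is automatic since $1-q^i\beta j = 1 - q^i\beta j > 1$ for all $i\ge0$ when $\beta<0$, $j\ge1$; the argument of each logarithm stays positive, so $\log H_q(\beta(j+z))$ is genuinely analytic in a neighborhood of $z=0$.

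The main obstacle, and the only point requiring care, is the complex-$q$ case $0<|q|<1$ with $q\notin\mathbb{R}$, where the singularities $z_i = q^{-i}/\beta - j$ spiral in the complex plane rather than lying on a ray. Here I would estimate $|z_i| = |q^{-i}/\beta - j|$ directly: writing $q^{-i}/\beta$, its modulus is $|q|^{-i}/|\beta| \ge 1/|\beta|$, and I would need a lower bound on $|q^{-i}/\beta - j|$ that survives the possibility of cancellation between the (now complex) term $q^{-i}/\beta$ and the real shift $j$. The cleanest route is to bound the real part: since the factors are indexed by $i$ and the convergence of the infinite product (guaranteed by $|q|<1$) ensures only finitely many $q^i\beta(j+z)$ can be near $1$ in any bounded region, a uniform estimate $|z_i|\ge c$ for some $c>\frac12$ follows from the discreteness of the singularity set together with the fact that the $i=0$ singularity has $|z_0|=|1/\beta - j|\ge j - 1/|\beta|$, which still exceeds $\frac12$ provided $|\beta|$ is not too large; if $|\beta|$ is constrained to be small (consistent with the paper calling $\beta$ a ``nonzero small parameter''), this case is immediate. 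I expect the real-$\beta<0$ analysis above to be the intended content, with the complex-$q$ refinement handled by the same modulus estimate, so I would present the real case in full and remark that the complex case follows identically from the location of the zeros of the factors $1-q^i\beta(j+z)$.
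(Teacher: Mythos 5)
Your argument is correct in the regime the paper implicitly works in ($q$ real, $0<q<1$), but it takes a genuinely different route from the paper's proof. You read off the radius of convergence as the distance from $z=0$ to the nearest singularity of $\log H_q(\beta(j+z))=-\sum_{i\ge 0}\log\bigl(1-q^i\beta(j+z)\bigr)$, namely the branch points $z_i=q^{-i}/\beta-j$, which for $\beta<0$ and $q\in(0,1)$ all lie on the negative real axis with $|z_i|\ge j+1/|\beta|\ge 1$; this gives radius at least $1$, well beyond the claimed $\tfrac12$. The paper never mentions singularities: it writes $\log H_q(\beta(j+z))=\log H_q(\beta j)-\sum_{k\ge 0}\log\bigl(1-\tfrac{z\beta q^k}{1-j\beta q^k}\bigr)$, notes $\bigl|\tfrac{z\beta q^k}{1-j\beta q^k}\bigr|<|z|/j\le\tfrac12$, expands each logarithm as a Taylor series, and justifies interchanging the two summations by splitting the $k$-sum at the last index with $-j\beta q^k>1$; absolute convergence of the resulting double series then gives a convergent Taylor series for $|z|\le\tfrac12$. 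Your route is shorter and yields a sharper constant; the paper's is elementary power-series bookkeeping whose intermediate estimates are of the same kind it reuses when proving Lemma~\ref{conv_phi_series}. Two points you should repair: (i) to invoke ``radius of convergence $\ge$ distance to the nearest singularity'' you must note that the series of logarithms converges locally uniformly on the disc $|z|<j+1/|\beta|$ (immediate from the geometric decay $|q|^i$ of the terms), so the function is genuinely analytic there; (ii) your closing claim that the complex-$q$ case ``follows identically'' is false --- for non-real $q$ with $j|\beta|>1$ one can arrange $q^{i}\beta j=1$, putting a singularity at (or arbitrarily near) $z=0$, so the lemma really does require $q\in(0,1)$ or $|\beta|$ small; this is not a gap relative to the paper, however, since the paper's own proof tacitly makes the same assumption (both its inequality $|1-j\beta q^k|>j|\beta|q^k$ and its definition of $k_j$ require $q^k>0$).
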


\begin{lemma}
\label{conv_phi_series}
The series (\ref{phi_k_series}) is absolutely convergent for all $x$ provided $\beta <0$. 
 \end{lemma}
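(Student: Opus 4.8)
The plan is to establish absolute convergence by the ratio test applied to the power series in (\ref{phi_k_series}). For fixed $x$ the overall prefactor $\beta x^{1-k}$ is an irrelevant nonzero constant, so it suffices to treat the series $\sum_{j\ge 0} a_j$ with $a_j = \frac{\rho^{(H_q,\beta)}_{j-k}}{j!}\left(\frac{x}{\beta}\right)^j$, the case $x=0$ being trivial. First I would form the ratio of consecutive terms. Using the defining relations $r_i^{(H_q,\beta)} = \rho_i^{(H_q,\beta)}/\rho_{i-1}^{(H_q,\beta)}$ and $r_i^{(H_q,\beta)} = \beta H_q(i\beta)$, one gets
\be
\frac{a_{j+1}}{a_j} = r^{(H_q,\beta)}_{j+1-k}\,\frac{x}{\beta(j+1)} = H_q\big((j+1-k)\beta\big)\,\frac{x}{j+1},
\ee
so that $\left|\frac{a_{j+1}}{a_j}\right| = \big|H_q((j+1-k)\beta)\big|\,\frac{|x|}{j+1}$. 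Note that the finitely many negative-index factors $\rho^{(H_q,\beta)}_{j-k}$, which are finite and nonzero because $\beta$ is chosen so that $H_q(i\beta)$ never vanishes, affect only the initial terms and not the tail that governs convergence.

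The key step is then to bound $|H_q(m\beta)|$ as $m = j+1-k \to +\infty$. Since $\beta<0$, the argument $m\beta$ is large and negative, and from the product form (\ref{H_q_z}),
\be
H_q(m\beta) = \prod_{i=0}^\infty \frac{1}{1 - q^i m\beta} = \prod_{i=0}^\infty \frac{1}{1 + q^i m|\beta|},
\ee
where for real $q\in(0,1)$ every denominator factor exceeds $1$; hence $0 < H_q(m\beta) < 1$, and retaining only the $i=0$ factor even gives the sharper bound $H_q(m\beta) \le (1+m|\beta|)^{-1}$. In particular $|H_q((j+1-k)\beta)| \le 1$ for all sufficiently large $j$.

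Combining the two steps yields $\left|\frac{a_{j+1}}{a_j}\right| \le \frac{|x|}{j+1} \to 0$ as $j\to\infty$, so the ratio test gives absolute convergence for every $x$, proving the lemma. The one point requiring care, and the main (though modest) obstacle, is the bound on $H_q(m\beta)$ for large $m$: for real $q$ it is the elementary product estimate above, while for complex $q$ with $0<|q|<1$ one must instead show that $\prod_{i=0}^\infty |1 - q^i m\beta|$ stays bounded below, which follows by applying the reverse triangle inequality to the finitely many factors with $|q|^i m|\beta| \ge 2$ and using the absolute convergence $\sum_{i}|q|^i<\infty$ to control the remaining tail. All remaining manipulations are routine.
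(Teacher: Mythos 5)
Your proof is correct (for real $q\in(0,1)$, which is what the paper's own arguments effectively assume), and it takes a genuinely different, more elementary route than the paper. The paper works with $\log\rho^{(H_q,\beta)}_j = j\log\beta+\sum_{l=1}^{j}\log H_q(l\beta)$, replaces the sum by an integral via an Euler--Maclaurin type estimate (this is where Lemma \ref{HqTaylorLemma} is used), and then inserts the large-$|z|$ asymptotics of $\log H_q$ from Lemma \ref{HqAsLemma} to get the sharp growth estimate $\log\rho^{(H_q,\beta)}_j=\frac{j\log^2 j}{2\log q}+O(j\log j)$; since $\log q<0$ this forces $\rho^{(H_q,\beta)}_j\to 0$ superexponentially, and convergence follows. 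You instead note that for $\beta<0$ and $m\ge 1$ every factor in $H_q(m\beta)=\prod_{i\ge 0}(1+q^im|\beta|)^{-1}$ is less than $1$, so $0<H_q(m\beta)<1$, and the ratio test closes the argument with no asymptotic analysis at all; equivalently, since $|\rho^{(H_q,\beta)}_{j}|\le|\beta|^{j}$ for $j\ge 0$, the tail of (\ref{phi_k_series}) is dominated termwise by a constant times $\sum_j |x|^j/j!=e^{|x|}$. What your route buys is self-containedness: Lemmas \ref{HqTaylorLemma} and \ref{HqAsLemma} are not needed for this lemma. What the paper's route buys is the quantitative decay rate of $\rho^{(H_q,\beta)}_j$, which is of the same nature as the growth estimates it needs anyway for the Mellin--Barnes representation (Lemma \ref{AqAsLemma} and Theorem \ref{thm_mellin_rep_quantum}). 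One caution: your closing remark about complex $q$ is the one shaky point. The dangerous factors there are not those with $|q|^im|\beta|\ge 2$ (the reverse triangle inequality handles those), but those with $q^im\beta$ close to $1$, which occurs when $\arg(q^i)$ is near $\pi$ while $|q|^im|\beta|$ is near $1$; along a sparse subsequence of $m$ the product $\prod_i|1-q^im\beta|$ can then become small and $|H_q(m\beta)|$ can grow with $m$, so the plain ratio test may fail and one must instead count how often such $m$ occur. Since the paper itself tacitly takes $q$ real (its proofs use $\log q<0$, $\log q^{-1}>0$, etc.), this aside does not affect the validity of your argument in the regime the paper actually treats.
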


 \begin{lemma}
\label{HqAsLemma}
	The asymptotic form of $H_q(z)$ for large $|z|$ in the left half plane is given by
	\be
	\label{HqAs}
		\log H_q(z) \sim \frac{\log^2(-z)}{2\log q}-\frac{\log(-z)}{2\log q} -S(-z) -C_q+o(1)
	\ee
	with
	\be
	\label{SHqDef}
		S(z)=\sum\limits_{k=1}^{\infty}\frac{\cos\left(\frac{2\pi k\log z}{\log q}\right)}{\sinh\left(\frac{2\pi^2 k}{\log q}\right)}\,\quad
	C_q=-\frac{\pi^2}{6}\left(\frac{1}{\log q}+\frac{\log q}{2\pi^2}\right)\,.
	\ee	
\end{lemma}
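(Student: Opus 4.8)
The plan is to treat $\log H_q(z)=-\sum_{i=0}^\infty\log(1-q^iz)$ by a Mellin--Barnes representation, converting the sum into a contour integral whose poles generate the successive terms of the asymptotic expansion. I would first set $w=-z$, so that $w$ lies in the right half-plane, $\log w=\log(-z)$ is unambiguous (the negative real axis is avoided), and $\log H_q(z)=-\sum_{i=0}^\infty\log(1+q^iw)$. Starting from the Mellin transform $\int_0^\infty y^{s-1}\log(1+y)\,dy=\tfrac{\pi}{s\sin\pi s}$, valid for $-1<\Re s<0$, each summand becomes $\log(1+q^iw)=\tfrac{1}{2\pi i}\int_{(c)}\tfrac{\pi}{s\sin\pi s}(q^iw)^{-s}\,ds$ with $-1<c<0$. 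Interchanging sum and integral is legitimate on this line because $\sum_{i\ge0}q^{-is}=(1-q^{-s})^{-1}$ converges there, giving the compact form
\be
\log H_q(z) = -\frac{1}{2\pi i}\int_{(c)} \frac{\pi}{s\sin\pi s}\,\frac{w^{-s}}{1-q^{-s}}\,ds .
\ee

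The asymptotics as $|w|\to\infty$ then come from shifting the contour to $\Re s=c'$ with $0<c'<1$ and collecting residues. The integrand has a triple pole at $s=0$: the double pole of $\tfrac{\pi}{s\sin\pi s}$ meets the simple zero of $1-q^{-s}$. Expanding $\tfrac{\pi}{s\sin\pi s}=\tfrac{1}{s^2}(1+\tfrac{\pi^2s^2}{6}+\cdots)$, the Bernoulli-type series $(1-q^{-s})^{-1}=\tfrac{1}{s\log q}+\tfrac12+\tfrac{s\log q}{12}+\cdots$, and $w^{-s}=e^{-s\log w}$, and reading off the coefficient of $s^{-1}$ produces the smooth part: the leading $\tfrac{\log^2(-z)}{2\log q}$, the term linear in $\log(-z)$, and a constant which I would match to $-C_q=\tfrac{\pi^2}{6\log q}+\tfrac{\log q}{12}$. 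Simultaneously, $1-q^{-s}$ vanishes simply at the equally spaced imaginary points $s_n=2\pi in/(-\log q)$, $n\neq0$; their residues carry $w^{-s_n}$ of unit modulus, so they do not decay and must be retained. Using $\Res_{s_n}(1-q^{-s})^{-1}=1/\log q$ and $\sin\pi s_n=-i\sinh(2\pi^2n/\log q)$, pairing $n$ with $-n$ turns $w^{-s_n}+w^{-s_{-n}}$ into a cosine, and summing yields the oscillatory contribution $-S(-z)$. The remaining integral on $\Re s=c'$ is $O(|w|^{-c'})=o(1)$, since the kernel decays like $e^{-\pi|\Im s|}$ on vertical lines while $|w^{-s}|=|w|^{-c'}e^{(\Im s)\arg w}$ grows only like $e^{|\Im s|\,|\arg w|}$ with $|\arg w|<\pi/2$, so the $\Im s$-integral converges and the prefactor $|w|^{-c'}\to0$.

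I would cross-check the smooth part by Euler--Maclaurin applied to $\phi(t)=\log(1+e^{t\log q}w)$: the integral term evaluates through the dilogarithm, $\int_0^\infty\phi(t)\,dt=\Li_2(-w)/\log q$, whose inversion $\Li_2(-w)\sim-\tfrac12\log^2w-\tfrac{\pi^2}{6}$ supplies the $\log^2$ term and part of the constant, the boundary term $\tfrac12\phi(0)\sim\tfrac12\log w$ supplies the logarithmic term, and $-\tfrac1{12}\phi'(0)\sim-\tfrac{\log q}{12}$ supplies the remaining Bernoulli constant; the higher Euler--Maclaurin terms vanish as $w\to\infty$ because $\phi^{(k)}(0)\to0$ for $k\ge2$. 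The Poisson-summation (oscillatory) remainder reproduces $S(-z)$, confirming the residue computation.

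The main obstacle is twofold. First, the bookkeeping at the triple pole: one must assemble the three Laurent coefficients correctly to land exactly on $C_q=-\tfrac{\pi^2}{6}\big(\tfrac{1}{\log q}+\tfrac{\log q}{2\pi^2}\big)$, and keep careful track of the single linear-in-$\log$ coefficient. Second, and more delicate, is justifying the contour shift across the infinite string of imaginary poles $\{s_n\}$: the deformation must skirt these poles, the horizontal segments at $\Im s=\pm T$ must be shown to vanish as $T\to\infty$ (using the $e^{-\pi|\Im s|}$ decay of the kernel against the $\le e^{(\pi/2)|\Im s|}$ growth of $w^{-s}$ in the open left half-plane for $z$), and the residue series $\sum_{n\neq0}\Res_{s_n}$ must be shown to converge — which it does geometrically, thanks to the $\sinh(2\pi^2n/\log q)$ in the denominator. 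Uniformity of the $o(1)$ error as $|z|\to\infty$ throughout the left half-plane requires keeping $|\arg w|$ bounded away from $\pi/2$, or a sharper estimate of the tail integral near the boundary rays.
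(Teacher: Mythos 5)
Your route is genuinely different from the paper's. The paper never introduces a Mellin kernel: it uses an Abel--Plana-type device built on the counting function $\eta(\zeta)=\log(1-e^{-2\pi i\zeta})$, writing $\sum_{m\ge0}\log(1-zq^m)$ as $\int_0^\infty\log(1-zq^\zeta)\,d\zeta$ plus contour integrals of $\log(1-zq^{\zeta e^{\mp i\omega}})\,\partial_\zeta\eta$ along rotated rays, then rotating to $\omega=\pi/2$; the oscillatory term comes from the singularities crossed at $\zeta_k=(i\pi(2k-1)+\log z)/\log q$, and the smooth part from dilogarithm-type evaluation of the first integral. Your plan --- Mellin representation of $\log(1+y)$ with kernel $\pi/(s\sin\pi s)$, geometric summation to $(1-q^{-s})^{-1}$ on $\Re s<0$, contour shift past the triple pole at $s=0$ and the simple poles $s_n=2\pi i n/(-\log q)$, vertical tail $O(|w|^{-c'})$ with $w=-z$ --- is the standard harmonic-sum technique; the representation, the interchange of sum and integral, and the decay/convergence estimates are all correctly justified in your sketch, and this route is arguably cleaner than the paper's.

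However, there is a gap at precisely the step you leave unexecuted: the residues do not reproduce the formula as stated, and you assert agreement where there is none. At $s_n$ the kernel contributes the factor $1/s_n\propto 1/n$, so pairing $\pm n$ gives the total oscillatory contribution
\be
-\sum_{n\ge 1}\frac{1}{n}\,\frac{\cos\left(\frac{2\pi n\log(-z)}{\log q}\right)}{\sinh\left(\frac{2\pi^2 n}{\log q}\right)},
\ee
which carries an extra $1/n$ relative to the series (\ref{SHqDef}); it is \emph{not} $-S(-z)$. Likewise, the triple-pole bookkeeping you outline yields the linear term $-\tfrac12\log(-z)$, not $-\tfrac{\log(-z)}{2\log q}$ as in (\ref{HqAs}); indeed your own Euler--Maclaurin cross-check produces exactly this from the boundary term $\tfrac12\phi(0)$, and you cite it as confirming the lemma without noticing that $-\tfrac12\neq -\tfrac{1}{2\log q}$. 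These mismatches are defects of the printed formula rather than of your method: the functional equation $F(w)-F(qw)=\log(1+w)$ for $F(w)=\sum_{m\ge0}\log(1+wq^m)$ forces the coefficient of $\log w$ in $F$ to be $+\tfrac12$ (hence $-\tfrac12$ in $\log H_q$), and a numerical test (e.g.\ $q=e^{-20}$, $z=-e^{10}$) matches the $1/n$-corrected residue sum to within a few parts in $10^6$, while the stated $S(-z)$ misses by roughly $0.09$; only the leading term and the constant $C_q$ of (\ref{HqAs}) survive scrutiny. So, carried out honestly, your argument proves a corrected version of the lemma; as a proof of the statement literally as given it fails at the matching step, and a complete write-up must either display these corrected subleading terms or explicitly flag the discrepancy. (Downstream this is harmless: the paper only uses the leading $\tfrac{\log^2(-z)}{2\log q}$ behaviour and boundedness of the remainder.)
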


Following \cite{ACEH1, ACEH2, ACEH3}, we introduce the recursion operator
\be
R := \beta x H_q(\beta\DD),
\ee
where $\DD$ is the Euler operator
\be
\DD := x {\d \over dx},
\ee
and verify that the $\phi^{(H_q, \beta)}_k$'s also satisfy the recursion relations
\be
R(\phi^{(H_q, \beta)}_k)= \phi^{(H_q, \beta)}_{k-1},  \quad  k \in \Zb,
\label{phi_k_rec}
\ee
and the $k=1$ value $\phi^{(H_q, \beta)}_1(x)$ coincides with  (\ref{phi1_tau}).

%%%%%%%%%% Section 3. Mellin-Barnes integral representation of $\phi_k$  %%%%%%%%%%%
\section{Mellin-Barnes integral representation of $\phi_k(x)$} 
\label{mellin_integral}

As in the case of rational weight generating functions \cite{BH}, we can equivalently represent the function $\phi^{(H_q, \beta)}_k$
 in the form of a Mellin-Barnes integral, provided $\beta <0$. Define 
 \be
A_{H_q, k}(z) := (-\beta)^{1-k}\Gamma(1-k-z)\prod_{m=0}^\infty \left( (-\beta q^{m})^{-z} {\Gamma(-\beta^{-1}q^{-m}) \over \Gamma(z-\beta^{-1}q^{-m})}\right).
\label{AH_qk}
\ee
\begin{theorem}
\label{thm_mellin_rep_quantum}
The following integral representation of  $\phi^{(H_q, \beta)}_k(x)$ is valid for all $x\in \Cb$,
\be
\label{phi_irep_mellin}
\phi^{(H_q, \beta)}_k= 
\frac{1}{2\pi i}\int_{\CC_k} A_{H_q, k}(s) x^s ds, 
\ee
where the contour $\CC_k$  starts  at $+\infty$ immediately above the real axis, proceeds to the left above the axis,
winds around the poles at the integers $s= -k, -k+1. \dots$ in a counterclockwise sense and returns, just below the axis, to $\infty$.
\end{theorem}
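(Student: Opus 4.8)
The plan is to establish the Mellin-Barnes representation (\ref{phi_irep_mellin}) by showing that closing the contour $\CC_k$ and summing the residues of $A_{H_q,k}(s)x^s$ reproduces exactly the absolutely convergent Laurent series (\ref{phi_k_series}). First I would identify the poles of the integrand. The kernel $A_{H_q,k}(z)$ consists of the factor $\Gamma(1-k-z)$, which has simple poles at $z = 1-k, 2-k, 3-k, \dots$ (i.e. at $s=-k+1,-k+2,\dots$ after the index shift), together with the convergent infinite product $\prod_{m\ge 0}\bigl((-\beta q^m)^{-z}\Gamma(-\beta^{-1}q^{-m})/\Gamma(z-\beta^{-1}q^{-m})\bigr)$, whose denominator $\Gamma$-factors never introduce poles (the $\Gamma$ is in the denominator) and whose convergence is controlled by the asymptotic analysis of Lemma \ref{HqAsLemma}. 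Thus the only poles inside $\CC_k$ are those of $\Gamma(1-k-z)$, located precisely at the integers $s=-k,-k+1,\dots$ described in the statement.

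Next I would compute the residue at each pole $s = j-k$ for $j\ge 0$. Using the standard residue $\Res_{z=n}\Gamma(-z) = (-1)^{n+1}/n!$, the factor $\Gamma(1-k-s)$ contributes a residue proportional to $1/j!$, and the remaining product, evaluated at $s=j-k$, must be shown to equal (up to the explicit prefactor $(-\beta)^{1-k}$ and a power of $x$) the coefficient $\rho^{(H_q,\beta)}_{j-k}$ appearing in (\ref{phi_k_series}). The key identity to verify is that the infinite product over $m$, evaluated at the pole, telescopes into the content-product coefficient $\rho^{(H_q,\beta)}_{j-k} = \prod_{\ell} r^{(H_q,\beta)}_\ell$ with $r^{(H_q,\beta)}_\ell = \beta H_q(\ell\beta) = \beta\prod_{m\ge0}(1-q^m\cdot \ell\beta)^{-1}$; this is where the definition (\ref{H_q_z}) of $H_q$ as the infinite product $\prod_m(1-q^m z)^{-1}$ enters, converting ratios of $\Gamma$-functions at integer-shifted arguments into the requisite products of $(1-q^m\ell\beta)^{-1}$ factors.

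The main obstacle will be the analytic control needed to justify closing the contour and interchanging the sum over residues with the limit: I must show that the integrand $A_{H_q,k}(s)x^s$ decays fast enough along the returning arcs (both to $+\infty$ above and below the axis, and on any connecting arc pushed to $\Re s \to -\infty$) so that the contour integral equals the sum of enclosed residues, for \emph{all} $x\in\Cb$ and not merely in a sector. Here the large-$|z|$ asymptotics of $\log H_q(z)$ from Lemma \ref{HqAsLemma} are essential: the leading $\log^2(-z)/(2\log q)$ term (with $\log q<0$ for $0<|q|<1$) produces the super-exponential decay of the $\Gamma$-product kernel that dominates the factor $x^s=e^{s\log x}$, guaranteeing convergence of the tail estimates uniformly. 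I would estimate $|A_{H_q,k}(s)|$ on vertical and horizontal segments using Stirling's formula applied to the finitely many dominant $\Gamma$-factors together with the tail bound on the convergent product, confirming that the arc contributions vanish in the limit.

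Finally, having matched the residue sum to (\ref{phi_k_series}) and invoked Lemma \ref{conv_phi_series} for the absolute convergence of that series, I would conclude that the Mellin-Barnes integral and the Laurent series define the same entire function of $x$, establishing (\ref{phi_irep_mellin}) for all $x\in\Cb$. The orientation and winding convention of $\CC_k$ stated in the theorem is exactly what picks up each residue once with the correct sign, so no additional sign bookkeeping beyond the residue computation should be required.
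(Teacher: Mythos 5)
Your proposal is correct and follows essentially the same route as the paper's own (very terse) proof: both evaluate the integral as the sum of residues of $A_{H_q,k}(s)x^s$ at the simple poles of $\Gamma(1-k-s)$, using the super-exponential decay of the kernel in the right half-plane (the paper cites Lemma \ref{AqAsLemma}, whose content you effectively re-derive from Lemma \ref{HqAsLemma}) to justify convergence and the truncation of the contour, and then match the residue sum with the Laurent series (\ref{phi_k_series}). One small correction: since the poles extend to $+\infty$ on the right and the hairpin contour is truncated by segments at $\Re s = N+\tfrac{1}{2}$, no closing arc at $\Re s \to -\infty$ ever enters the argument, so that part of your analytic-control step is superfluous (and the decay needed there would in fact not hold).
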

The proof of this theorem depends on the asymptotic behaviour of $A_{H_q, k}(s) $ as $s\ra \infty$ in the right half plane, which follows
from the next two lemmas, whose proofs are given in Appendix \ref{app_A}.
\begin{lemma}
\label{AqAsLemma}
The asymptotic form of the function $A_{H_q, k}$  for large $|z|$, $\Re(z)>0$ and  $\beta < 0$, is given by
\be
	\log A_{H_q, k}(z) \sim  \frac{z\log^2(-\beta z)}{2\log q}+O(z\log z)\,,\quad \Re(z)>0.
\label{AqAs}
\ee
\end{lemma}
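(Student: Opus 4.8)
The plan is to exploit the fact that the infinite $\Gamma$-product \eqref{AH_qk} telescopes under a unit shift in $z$ down to a single factor of $H_q$. First I would establish the functional equation
\be
(z+k-1)\,A_{H_q,k}(z) = -\,H_q(\beta(z-1))\,A_{H_q,k}(z-1)
\ee
by direct cancellation in \eqref{AH_qk}: the ratio $\Gamma(1-k-z)/\Gamma(2-k-z)$ gives $1/(1-k-z)$, each power $(-\beta q^m)^{-z}$ gives $(-\beta q^m)^{-1}$, and each ratio $\Gamma(z-1+a_m)/\Gamma(z+a_m)$ (with $a_m := -\beta^{-1}q^{-m}$) gives $1/(z-1+a_m)$; using $-\beta q^m = 1/a_m$ the $m$-factors recombine as $(-\beta q^m)(z-1+a_m)=1-\beta(z-1)q^m$, so that $\prod_{m\ge 0}(1-\beta(z-1)q^m)^{-1}=H_q(\beta(z-1))$. (Equivalently, this equation is forced by the recursions \eqref{D_phi_k} and \eqref{phi_k_rec} together with the Mellin representation \eqref{phi_irep_mellin}, since in Mellin space $\DD$ acts as multiplication by $s$ and multiplication by $x$ as the shift $s\mapsto s-1$; the overall sign/branch is immaterial for the asymptotics.)

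Taking logarithms converts this into a first-order difference equation for $F(z):=\log A_{H_q,k}(z)$,
\be
F(z) - F(z-1) = \log H_q(\beta(z-1)) - \log(z+k-1) + \text{const}.
\ee
Since $\beta<0$ and $\Re(z)>0$, the argument $\beta(z-1)$ lies deep in the left half plane with $-\beta(z-1)\sim -\beta z$, so Lemma \ref{HqAsLemma} applies and yields
\be
\log H_q(\beta(z-1)) \sim \frac{\log^2(-\beta z)}{2\log q} + O(\log z),
\ee
the subleading $-\tfrac{1}{2\log q}\log(-\beta z)$, the oscillatory $S$-term and the constant $C_q$ of \eqref{HqAs} all being absorbed into $O(\log z)$.

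I would then recover $F$ by antidifferencing. Setting $\tilde F(z):=\int^{z}\big[\log H_q(\beta(t-1))-\log(t+k-1)\big]\,dt$, Euler–Maclaurin shows $\tilde F(z)-\tilde F(z-1)$ matches the right-hand side up to lower-order terms, so $F-\tilde F$ is slowly varying and contributes only to the error. Using $\int^{z}\log^2(-\beta t)\,dt = z\log^2(-\beta z)-2z\log(-\beta z)+2z$, the leading part of $\tilde F$ is exactly $\frac{z\log^2(-\beta z)}{2\log q}$, while $\int^{z}O(\log t)\,dt$ and $-\int^{z}\log(t+k-1)\,dt=-z\log z+z$ are both $O(z\log z)$, giving \eqref{AqAs}.

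The hard part is the last step: showing rigorously that the solution of the difference equation is asymptotic to the antiderivative $\tilde F$, uniformly as $|z|\to\infty$ with $\Re(z)>0$, rather than only term by term. This requires controlling the intrinsic $1$-periodic ambiguity of difference equations (here pinned down by the explicit meromorphic form of $A_{H_q,k}$) and ensuring Lemma \ref{HqAsLemma} holds uniformly enough to be integrated. A more robust alternative, which avoids the periodic ambiguity entirely, is to estimate $\log A_{H_q,k}(z)$ directly from \eqref{AH_qk} by Stirling: writing the $m$-th summand as $z\log a_m+\log\Gamma(a_m)-\log\Gamma(z+a_m)$, the terms with $a_m\lesssim|z|$ dominate, there are $M\sim \log(-\beta z)/\log(1/q)$ of them, and each contributes $\sim z\log(a_m/z)$; summing the resulting arithmetic progression in $m$ produces a contribution quadratic in $M$, namely $\sim \frac{z\log^2(-\beta z)}{2\log q}$, with the tail $a_m\gg|z|$, the $\Gamma(1-k-z)$ factor (handled via reflection), and all Stirling corrections bounded by $O(z\log z)$.
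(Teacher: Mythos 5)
Your functional equation does hold: with $a_m:=-\beta^{-1}q^{-m}$, each factor ratio in \eqref{AH_qk} collapses via $(-\beta q^m)(z-1+a_m)=1-\beta(z-1)q^m$, giving
\be
(z+k-1)\,A_{H_q,k}(z)=-H_q\big(\beta(z-1)\big)\,A_{H_q,k}(z-1).
\ee
However, the step you yourself flag as ``the hard part'' is a genuine gap, not a technicality, and it is exactly where this route fails as a proof. Iterating the difference equation moves $z$ only parallel to the real axis: after $N\approx\Re(z)$ steps you land in the strip $0<\Re(w)\le 1$ with $\Im(w)=\Im(z)$ unchanged. Since a first-order difference equation determines $\log A_{H_q,k}$ only up to a $1$-periodic function, the asserted asymptotics on all of $\Re(z)>0$ requires an a priori bound for $\log A_{H_q,k}$ on that strip, \emph{uniform in} $\Im(w)$; that is an asymptotic problem of the same nature and difficulty as the lemma itself, on which the functional equation gives no purchase. (If one only needs $z\to\infty$ with $|\Im(z)|$ bounded, which is what the contour $\CC_k$ in Theorem~\ref{thm_mellin_rep_quantum} actually uses, the base point stays in a compact set and this ambiguity disappears; but the lemma claims the full half plane, and your write-up closes neither version.) Moreover, replacing $\sum_{j}\log H_q(\beta(z-j-1))$ by $\int\log H_q(\beta(t-1))\,dt$ via Euler--Maclaurin requires bounds on derivatives of $\log H_q$ along the segment, i.e.\ a differentiable, uniform strengthening of Lemma~\ref{HqAsLemma}; the paper establishes exactly such control separately (Lemma~\ref{HqTaylorLemma}, used this way in the proof of Lemma~\ref{conv_phi_series}), and your sketch does not.

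Your fallback route is the one that works, and it is in essence the paper's own proof in coarser form. The paper inserts Binet's integral representation (exact Stirling) for the $\Gamma$-ratios; the leading terms recombine into $(z-\tfrac12)\log H_q(\beta z)$ plus $\hat{S}(z)=\sum_m\left[\beta^{-1}q^{-m}\log(1-\beta zq^m)+z\right]$, whose growth $-z\log(-\beta z)/\log q+O(z)$ is precisely Lemma~\ref{YAsLemma}; the summed Binet error integrals are bounded by $O(\log z/z)$ using $\sum_m(z-q^{-m}\beta^{-1})^{-1}=-d\log H_q(\beta z)/dz$; combining with Lemma~\ref{HqAsLemma} then gives $\log A_{H_q,k}(z)=z\log H_q(\beta z)+O(z\log z)$, which is \eqref{AqAs}. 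Your arithmetic-progression count ($M\sim\log(-\beta z)/\log q^{-1}$ active factors, each contributing $\sim z\log(a_m/z)$) reproduces exactly this leading coefficient, so the mechanism is right; to make it rigorous you would still need to write out (i) the tail bound $\log\Gamma(z+a_m)-\log\Gamma(a_m)=z\log a_m+O(z^2/a_m)$ for $a_m\gg|z|$, summable to $O(z)$ by geometric decay of $a_m^{-1}$; (ii) the head corrections, which are $O(z\log z)$ because $\sum_{a_m\le|z|}a_m=O(|z|)$; and (iii) the reflection-formula estimate for $\Gamma(1-k-z)$, valid only at bounded distance from its poles at $z=1-k,2-k,\dots$ (a caveat the paper also leaves implicit: the asymptotics is applied on $\CC_k$, which circles those poles). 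With (i)--(iii) supplied, your fallback is a complete proof, differing from the paper's only in organization: term-by-term Stirling with a split at $a_m\sim|z|$, instead of recombination into $H_q$ and $\hat{S}$ and the two auxiliary Lemmas~\ref{HqAsLemma} and \ref{YAsLemma}.
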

\begin{lemma}
\label{YAsLemma}
	The asymptotic form of the sum
	\be
		\hat{S}(z)=\sum_{k=0}^{\infty}\left[\beta^{-1}q^{-m}\log(1-\beta z q^m)+z\right],
	\ee
	for large $|z|$, $\Re(z)>0$ and $\beta<0$, is given by
	\be
		\hat{S}(z)\sim -\frac{z\log(-\beta z)}{\log q}+O(z).
	\ee
\end{lemma}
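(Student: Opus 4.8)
The plan is to recognize $\hat S(z)$ as a slowly varying sum that is well approximated by an integral, to evaluate that integral in closed form, and to control the sum–integral discrepancy by Euler–Maclaurin. Set $\gamma := -\log q > 0$ (taking $0<q<1$, as in the rest of this asymptotic analysis) and $a_m := -\beta^{-1}q^{-m}>0$, which grows geometrically in $m$ since $\beta<0$. As $z/a_m = -\beta z\, q^m$ has positive real part, every logarithm below is the principal branch, and the $m$-th summand becomes $g(m) := z - a_m\log(1+z/a_m)$. Because $\log(1+\xi)-\xi = O(\xi^2)$, one has $g(m) = O(z^2/a_m) = O(q^m)$ as $m\to\infty$, so the series converges absolutely and $g$ extends to a smooth, exponentially decaying function of a real variable $m\ge 0$ via $a(m) = -\beta^{-1}e^{\gamma m}$.

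First I would apply Euler–Maclaurin in the form $\sum_{m\ge 0}g(m) = \int_0^\infty g\,dm + \tfrac12 g(0) + R$, with $|R|\le \tfrac12\int_0^\infty|g'(m)|\,dm$; the leading behaviour comes entirely from the integral. Substituting $a=a(m)$ (so $dm = da/(\gamma a)$) and then $u = z/a$ turns it into an elementary integral,
\[
\int_0^\infty g\,dm = \frac{z}{\gamma}\int_0^{w}\frac{u-\log(1+u)}{u^2}\,du = \frac{z}{\gamma}\left[\frac{(1+w)\log(1+w)}{w} - 1\right], \qquad w := -\beta z,
\]
the relevant antiderivative being $(1+1/u)\log(1+u)$, valid for $\Re w>0$ by analytic continuation from $w>0$ real. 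Expanding for large $|w|$ gives $\frac{(1+w)\log(1+w)}{w}-1 = \log w - 1 + O(w^{-1}\log w)$, so the integral equals $\frac{z\log(-\beta z)}{\gamma} - \frac{z}{\gamma} + O(z) = -\frac{z\log(-\beta z)}{\log q} + O(z)$, which already exhibits the asserted leading term together with an $O(z)$ error.

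It then remains to absorb the Euler–Maclaurin corrections into $O(z)$. The endpoint term is immediate: $g(0) = z + \beta^{-1}\log(1-\beta z) = O(z)$. The remainder $R$ is the crux. Differentiating yields $g'(m) = \gamma z\big[\tfrac{1}{1+\xi} - \tfrac{\log(1+\xi)}{\xi}\big]$ with $\xi = z/a(m) = w e^{-\gamma m}$, and the change of variable $m\mapsto\xi$ converts the bound into
\[
\int_0^\infty|g'(m)|\,dm = |z|\int_0^{|w|}\left|\frac{1}{1+\rho e^{i\phi}} - \frac{\log(1+\rho e^{i\phi})}{\rho e^{i\phi}}\right|\frac{d\rho}{\rho}, \qquad \phi := \arg z.
\]
The integrand is $O(1)$ as $\rho\to 0$ (the bracket is $O(\rho)$) and $O(\rho^{-2}\log\rho)$ as $\rho\to\infty$, and it has no singularity since $\Re(1+\rho e^{i\phi}) = 1+\rho\cos\phi\ge 1$ for $|\phi|<\pi/2$; hence the $\rho$-integral is bounded by a constant $C(\phi)$ depending only on the argument of $z$. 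Therefore $\int_0^\infty|g'|\,dm = O(|z|)$, so $R=O(z)$, and combining the three pieces gives $\hat S(z) = -\frac{z\log(-\beta z)}{\log q} + O(z)$.

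The step I expect to be the main obstacle is this last uniform bound on $\int_0^\infty|g'|$: one must verify that after the substitution the surviving integral is genuinely independent of $|z|$ and convergent, and that $C(\phi)$ does not blow up as $\Re z\to 0^+$ (equivalently $\phi\to\pm\pi/2$), so that the error is uniformly $O(z)$ throughout the right half plane. Everything else—the closed-form evaluation of the main integral and its large-$w$ expansion—is routine calculus once the substitutions $a=a(m)$ and $u=z/a$ are in place.
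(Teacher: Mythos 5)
Your proof is correct, but it controls the sum--integral discrepancy by a genuinely different mechanism than the paper. Both arguments isolate the same leading object, the integral $\int_0^\infty\left[\beta^{-1}q^{-\zeta}\log(1-\beta z q^{\zeta})+z\right]d\zeta$, and evaluate it in closed form: your $\frac{z}{\gamma}\left[\frac{(1+w)\log(1+w)}{w}-1\right]$ is exactly the paper's eq.~(\ref{YLead}) after substituting $w=-\beta z$, $\gamma=-\log q$; both arguments even produce the same endpoint term, since your $\tfrac12 g(0)=\tfrac{z}{2}+\tfrac{1}{2\beta}\log(1-\beta z)$ appears verbatim in the paper's representation (\ref{YIrep}). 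The difference is in the error control: the paper repeats the Abel--Plana-type contour construction from Lemma~\ref{HqAsLemma}, using the counting kernel $\eta(\zeta)=\log(1-e^{-2\pi i \zeta})$, rotating the contour to the imaginary direction and collecting residues, which yields explicit correction terms (an integral against $\log(1-e^{-2\pi\zeta})$ growing like $\log z$, and the oscillatory sum $\tilde S$ of eq.~(\ref{Ysum}), bounded linearly). You instead invoke first-order Euler--Maclaurin and bound the remainder by $\tfrac12\int_0^\infty|g'|$. Your route is more elementary and self-contained---no contour deformation, no residue sums, no need to introduce and bound $\tilde S$---whereas the paper's heavier machinery is what produces the explicit oscillatory corrections it needs elsewhere (e.g.\ the term $S(-z)$ in Lemma~\ref{HqAsLemma}). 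For the mere $O(z)$ error claimed in this lemma, Euler--Maclaurin suffices.

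The step you flag as the main obstacle is not actually a gap; the uniformity of $C(\phi)$ as $\phi\to\pm\pi/2$ follows from the observation you already made. Since $\Re\,\xi\ge 0$ along the whole ray, one has $|1+t\xi|\ge 1$ for all $t\in[0,1]$, so the representation $\frac{1}{1+\xi}-\frac{\log(1+\xi)}{\xi}=\int_0^1\frac{(t-1)\,\xi}{(1+\xi)(1+t\xi)}\,dt$ gives the bound $\rho/2$ for the bracketed expression, uniformly for $|\phi|\le\pi/2$; this handles $\rho\le 2$. For $\rho\ge 2$, $|1+\xi|\ge\rho-1\ge\rho/2$ and $|\log(1+\xi)|\le\log(1+\rho)+\pi$, so the integrand is at most $C\rho^{-2}\log(2+\rho)$ with an absolute constant $C$. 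Hence $\sup_{|\phi|\le\pi/2}C(\phi)<\infty$, the remainder is $O(z)$ uniformly on the closed right half plane, and your argument is complete.
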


We now proceed to the proof of  Theorem~\ref{thm_mellin_rep_quantum}.
 \begin{proof}[Proof of Theorem~\ref{thm_mellin_rep_quantum}]
 The  asymptotic formula (\ref{AqAs}) ensures that the  integral in (\ref{phi_irep_mellin}) is convergent.
The poles are simple and located at the integers $\{i-k\}_{i\in \Nb}\dots $ and the residue  at $s = i-l$ is
\be
{\beta\over (i-1)!} \left({x\over \beta}\right)^{i-k} {\hskip - 10 pt}\rho^{(H_q, \beta)}_{i-k-1},
\ee
where $\rho^{(H_q, \beta)}_j$ is defined in  (\ref{rho_i_def}).
Evaluating the integral as the sum over residues at the poles  $\{s = i-k \}_{i\in \Nb}$ thus gives eq.~\ref{phi_k_series}).

\end{proof}
%%%%%%%%%%%%%%%%   Section 4. The $\tau$ function evaluated on power sums %%%%%%%%%%%

\section{The $\tau$-function $\tau^{(H_q,\beta)}({\bf t})$ evaluated on power sums}
\label{tau_trace_invars}

As detailed in \cite{ACEH1, ACEH2, ACEH3},  $\tau^{(H_q,\beta)}({\bf t})$ is the KP $\tau$-function  corresponding to 
the Grassmannian element $W^{(H_q, \beta)}$  spanned by the basis elements $\{\phi^{(H_q, \beta)}_i(1/z)\}_{i\in \Nb^+}$
obtained from the monomial basis $\{z^i\}_{i\in \Nb}$ by applying a suitable group element $g$, 
\be
W^{(H_q, \beta)}  = \Span\{ \phi^{(H_q, \beta)}_i(1/z) := g^{(H_q, \beta)}(z^{i-1})\},
\ee
where
\be
(g^{(H_q, \beta)}f)(z) := \res_{\zeta=0}\left( \rho^{(H_q, \beta)}(\zeta) f(z/\zeta) e^{\zeta\over \beta z}\right)
\ee
and 
\be
\rho^{(H_q, \beta)}(z) := \sum_{i\in \Zb} \rho^{(H_q, \beta)}_{-i-1} z^i.
\label{rho_fourier}
\ee
If $\tau^{(H_q,\beta)}({\bf t})$  is evaluated at the trace invariants
\be
{\bf t} = \big[ X \big], \quad t_i = {1\over i} \tr X^i
\ee
of a diagonal $n \times n$ matrix
\be
X := \diag(x_1, \dots, x_n),
\label{X_def}
\ee
it follows from the Cauchy-Binet \cite{Mac} identity that it is expressible as the ratio of $n \times n$ determinants \cite{HB}
\be
\tau^{(H_q,\beta)}\left(\big[X \big]\right) ={\prod_{i=1}^n x_i^{n-1}\over \prod_{i=1}^n \rho^{(H_q, \beta)}_{-i}} {\det\left( \phi^{(H_q, \beta)}_i(x_j)\right)_{1\le i,j, \le n} \over \Delta(x)},
\label{tau_phi_i_det}
\ee
where
\be
\Delta(x) = \prod_{1\le i < j \le n}(x_i - x_j)
\ee
is the Vandermonde determinant.

From (\ref{phi_k_series}, (\ref{D_phi_k}}), it also follows that each $\phi^{(H_q, \beta)}_k(x)$ may be expressed 
as a finite lower triangular linear combination of the powers of the Euler operator $\DD$ applied to $\phi_n(x)$:
\be
\phi^{(H_q, \beta)}_k(x) = \beta^{n-k}\DD^{n-k} \phi^{(H_q, \beta)}_n(x) + \sum_{j=0}^{n-k-1} \Gamma_{k j} \DD^j \phi^{(H_q, \beta)}_n(x), \quad  k=1, \dots, n,
\ee
where the constant coefficients $\Gamma_{kj}$ are  easily determined from (\ref{D_phi_k}).
Therefore, by elementary column operations, we have
\be
\tau^{(H_q,\beta)}\left(\big[X \big]\right) =\kappa^{(H_q, \beta)}_n  \left(\prod_{i=1}^n x_i^{n-1}\right){\det\left( \DD^{i-1}\phi^{(H_q, \beta)}_n(x_j)\right)_{1\le i,j, \le n} \over \Delta(x)},
\label{tau_phi_i_det}
\ee
where
\be
\kappa^{(H_q, \beta)}_n :={\beta^{{1\over 2}n(n-1)}\over \prod_{i=1}^n \rho^{(H_q, \beta)}_{-i}}.
\label{kappa_n}
\ee
Now define the diagonal matrix 
\be
Y = \diag (y_1, \dots y_n)
\label{Y_def}
\ee
 by
\be
X = e^Y, \quad Y=\ln(X), \quad x_i = e^{y_i}, \quad i =1 , \dots, n,
\ee
and let
\be
f^{(H_q, \beta)}_n(y) := \phi^{(H_q, \beta)}_n(e^y) = \int_{\CC_n}A_{H_q,n}(s) e^{ys} ds.
\ee
Then (\ref{tau_phi_i_det}) can be expressed as a ratio of Wronskian determinants
\be
\tau^{(H_q\beta)}\left(\big[X \big]\right) =\kappa^{(H_q, \beta)}_n  \left(\prod_{i=1}^n x_i^{n-1}\right){\det\left( (f^{(H_q, \beta)}_n)^{(i-1)}(y_j)\right)_{1\le i,j, \le n} \over \Delta(e^y)},
\label{tau_wronskian_f_det}
\ee
where
\be
\Delta(e^y):= \prod_{1\le i<j \le n}(e^{y_i} - e^{y_j}).
\ee

%%%%%%%   Section 5. Matrix Model representation of $\tau^{(H_q,\beta)}\left(\big[X \big]\right) $ %%%%%%%

\section{Matrix integral  representation of $\tau^{(H_q,\beta)}\left(\big[X \big]\right)$}
\label{MM_tau}

We now state and prove our main result.

Consider the generalized Br\'ezin-Hikami \cite{BrH} matrix integral
\be
\Zb_{d\mu_q}(Y) =\int_{M \in \Nor^{n\times n}_{\CC_n}}d\mu_q(M) e^{\tr Y M}.
\label{brezin_hikami_Y}
\ee
where
\be
d\mu_q(M) = \d\mu_0(M) \det(A_{H_q,n}(M))
\label{d_mu_A_q_measure}
\ee
is a conjugation invariant measure on the space $\Nor^{n\times n}_{\CC_n}$ of $n \times n$ normal matrices (i.e., unitarily diagonalizable)
\be
M = U Z U^\dag \in \Nor^{n\times n}_{\CC_n}, \quad U \in U(n), \quad Z= \diag(\zeta_1, \dots, \zeta_n),
\ee
with eigenvalues $\zeta_i \in \CC_n$ supported on the contour $\CC_n$, and $d\mu_0(M)$ is the Lebesgue
 measure on $\Nor^{n\times n}_{\CC_n}$.

 \begin{theorem}
 \label{thm_matrix_integral_tau_H_q}
 The KP $\tau$-function $\tau^{(H_, \beta)}({\bf t})$ has the following  matrix integral representation when restricted 
 to the the trace invariants $[X]$ of an externally coupled matrix:
\be
\tau^{(H_q,\beta)} (\big[ X \big])= {\beta^{{1\over 2}n(n-1)} (\prod_{i=1}^n x_i^{n-1})\Delta(\ln(x)) \over (\prod_{i=1}^n i!) \Delta(x) } \Zb_{d\mu_q}(\ln(X)).
\label{matrix_integral_tau_H_q}
\ee
 \end{theorem}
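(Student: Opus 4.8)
The plan is to start from the Wronskian form (\ref{tau_wronskian_f_det}) of the $\tau$-function and to show that the determinant appearing there is, up to explicit constants, the eigenvalue reduction of the Br\'ezin--Hikami integral $\Zb_{d\mu_q}$. First I would insert the Mellin--Barnes representation $f^{(H_q,\beta)}_n(y)=\int_{\CC_n}A_{H_q,n}(s)e^{ys}\,ds$ coming from Theorem~\ref{thm_mellin_rep_quantum} into the Wronskian. Since differentiation in $y$ brings down a factor $s$, the $(i-1)$-st derivative is $\int_{\CC_n}A_{H_q,n}(s)s^{i-1}e^{ys}\,ds$, so by multilinearity in the columns (introducing an independent contour variable $s_j$ for the $j$-th column) the determinant becomes the $n$-fold integral
\be
\det\!\big((f^{(H_q,\beta)}_n)^{(i-1)}(y_j)\big)_{1\le i,j\le n}=\int_{\CC_n^{\,n}}\Delta(s)\prod_{j=1}^n A_{H_q,n}(s_j)\,e^{y_j s_j}\,ds_j,
\ee
where $\Delta(s)=\det(s_j^{i-1})$ is the Vandermonde in the integration variables. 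The interchange of determinant and integral is legitimate because the asymptotic bound (\ref{AqAs}) of Lemma~\ref{AqAsLemma} forces absolute convergence of each factor, so Fubini applies.

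Next I would reduce the matrix integral side. Writing $M=UZU^\dag$ with $Z=\diag(\zeta_1,\dots,\zeta_n)$, the conjugation invariance of $d\mu_q$ and the fact that $\det(A_{H_q,n}(M))=\prod_i A_{H_q,n}(\zeta_i)$ (functional calculus on a normal matrix) let me decompose the flat measure as $d\mu_0(M)=c_n\,\Delta(\zeta)^2\prod_i d\zeta_i\,dU$ over the contour eigenvalues and Haar-distributed $U$. The angular integral is then evaluated by the Harish-Chandra--Itzykson--Zuber formula $\int_{U(n)}e^{\tr(YUZU^\dag)}dU=\big(\prod_{p=1}^{n-1}p!\big)\det(e^{y_i\zeta_j})/(\Delta(y)\Delta(\zeta))$, which gives
\be
\Zb_{d\mu_q}(Y)=\frac{c_n\prod_{p=1}^{n-1}p!}{\Delta(y)}\int_{\CC_n^{\,n}}\Delta(\zeta)\,\det(e^{y_i\zeta_j})\prod_{i=1}^n A_{H_q,n}(\zeta_i)\,d\zeta_i.
\ee

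The key step linking the two sides is the standard antisymmetrization. Expanding $\det(e^{y_i\zeta_j})=\sum_\sigma\sgn(\sigma)\prod_i e^{y_i\zeta_{\sigma(i)}}$ and relabelling $\zeta_{\sigma(i)}\mapsto\zeta_i$ in each summand, the sign $\sgn(\sigma)$ from the permutation is cancelled by the sign picked up by the antisymmetric factor $\Delta(\zeta)$, while the symmetric weight $\prod_i A_{H_q,n}(\zeta_i)\,d\zeta_i$ is unchanged; hence all $n!$ terms coincide and
\be
\int_{\CC_n^{\,n}}\Delta(\zeta)\,\det(e^{y_i\zeta_j})\prod_i A_{H_q,n}(\zeta_i)\,d\zeta_i=n!\int_{\CC_n^{\,n}}\Delta(\zeta)\prod_i A_{H_q,n}(\zeta_i)e^{y_i\zeta_i}\,d\zeta_i,
\ee
and the right-hand integral is exactly the $n$-fold integral produced from the Wronskian above. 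Substituting $y_i=\ln x_i$ (so $\Delta(e^y)=\Delta(x)$ and $\Delta(y)=\Delta(\ln x)$) and combining with (\ref{tau_wronskian_f_det}) and (\ref{kappa_n}) reduces the theorem to the numerical identity $\kappa^{(H_q,\beta)}_n=\beta^{\frac12 n(n-1)}c_n$, i.e. $c_n=1/\prod_{i=1}^n\rho^{(H_q,\beta)}_{-i}$, once one uses $\prod_{i=1}^n i!=n!\prod_{p=1}^{n-1}p!$ to see that the HCIZ constant and the symmetrization factor $n!$ together account for the $\prod_i i!$ in the target.

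The main obstacle I anticipate is precisely this constant bookkeeping: pinning down the normalization $c_n$ of the flat measure $d\mu_0$ on $\Nor^{n\times n}_{\CC_n}$, together with the $(2\pi i)^{-1}$ factors attached to the contour integrals, so that it reproduces the factor $1/\prod_{i=1}^n\rho^{(H_q,\beta)}_{-i}$ hidden in $\kappa^{(H_q,\beta)}_n$. All of the $y$- and $x$-dependence, as well as the combinatorial factorials, match automatically from the two determinantal reductions; only the overall constant requires care, and fixing the convention for $d\mu_0$ accordingly completes the proof.
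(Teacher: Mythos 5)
Your proposal is correct and follows essentially the same route as the paper: the paper likewise diagonalizes $M$, evaluates the angular integral with the Harish-Chandra--Itzykson--Zuber formula (\ref{HCIZ_integral}), and then invokes the Andr\'eiev identity---which your column-multilinearity plus antisymmetrization argument proves from scratch---to identify the resulting eigenvalue integral with the Wronskian determinant in (\ref{tau_wronskian_f_det}), the only cosmetic difference being that you run the argument from the Wronskian toward the matrix integral rather than the reverse. Your closing point about the normalization is also well taken: the paper implicitly uses the eigenvalue decomposition of $d\mu_0$ with unit constant, so the factor $1/\prod_{i=1}^n\rho^{(H_q,\beta)}_{-i}$ contained in $\kappa^{(H_q,\beta)}_n$ of (\ref{kappa_n}) must indeed be absorbed into the convention for the flat measure (exactly your $c_n$) for the constants in (\ref{matrix_integral_tau_H_q}) to come out as stated.
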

\begin{proof}
Using the Harish-Chandra-Itzykson-Zuber integral  \cite{HC, IZ} 
\be
\int_{U \in U(n)}d\mu_{\text{H}}(U) e^{\tr(YUZ U^\dag)} = {(\prod_{i=1}^{n-1} i!) \det\left(e^{y_i \zeta_j}\right)) \over \Delta(y) \Delta(\zeta)},
\label{HCIZ_integral}
\ee
 where $d\mu_{\text{H}}(U) $ is the Haar measure on $U(n)$, to evaluate the angular integral gives
\bea
\Zb_{d\mu_q}(Y) &\&= {(\prod_{i=1}^{n-1} i!)\over \Delta(y)}\prod_{i=1}^n\left(\int_{\CC_n}  d\zeta_i A_{H_q,n}(\zeta_i) \right)\Delta(\zeta) \det\left( e^{y_i \zeta_j}\right)_{1\le i, j, \le n} \cr
&\& = { (\prod_{i=1}^n i!)\over \Delta(y)}\det\left((f^{(H_q, \beta)}_n)^{(i-1)}(y_j)\right)_{1\le i,j, \le n},
\eea
where we have used the Andr\'eiev identity \cite{An} in the second line.
By eq.~(\ref{tau_wronskian_f_det}), we therefore have the matrix integral representation (\ref{matrix_integral_tau_H_q}) 
of $\tau^{(H_q,\beta)}(\big[X\big])$.
\end{proof}

\bigskip

\bigskip

%%%%%%%%%%%%%%%%%%%%%%%%%%%%%%%%%%%%%%%%%%%
%%%%%%%% Appendices %%%%%%%%%
%%%%%%%%%%%%%%%%%%%%%%%%%%%%%%%%%%%%%%%%%%%
\begin{appendix}

%%%%%%%%%%%%%% Appendix %%%%%%%%%%%%
\section{Appendix: Proofs of Lemmas \ref{conv_phi_series} - \ref{HqTaylorLemma}, \ref{AqAsLemma} and  \ref{YAsLemma} }
\label{app_A}

We here provide the proofs of the lemmas that were omitted from the body of the paper.

\begin{proof}[Proof of Lemma \ref{HqTaylorLemma} ]
	\be
		\log H_q(\beta(j+z))=\log H_q(\beta j)-\sum_{k=0}^{\infty}\log\left(1-\frac{z\beta q^k}{1-j\beta q^k}\right)
	\ee
	Since we are only interested in $|z|\le\frac{1}{2}$, the following inequality holds:
	\be
		\Big|\frac{z\beta q^k}{1-j\beta q^k}\Big|< \frac{|z|}{j}\le\frac{1}{2}\,.
	\ee
	Then each logarithm can be expanded into convergent Taylor series:
	\be
		\log H_q(\beta(j+z))=\log H_q(\beta j)-\sum_{k=0}^{\infty}\sum_{i=1}^{\infty} \frac{1}{i}\left(\frac{z\beta q^k}{1-j\beta q^k}\right)^i
	\ee
	Let us denote by $k_j$ the largest integer such that $-j\beta q^{k_j}>1$. In order to show that the order of summation can be changed without losing the convergence we split the sum as follows:
	\be
		\Big|\sum_{i=1}^{\infty} \frac{1}{i}\left(\frac{z}{j}\right)^i\sum_{k=0}^{k_j}\left(\frac{j\beta q^k}{1-j\beta q^k}\right)^i\Big|
		<\sum_{i=1}^{\infty}\frac{1}{i}\left(\frac{|z|}{j}\right)^i\frac{\log(-j\beta)}{\log q^{-1}}
	\ee
	\be
		\Big|\sum_{i=1}^{\infty}\frac{1}{i}\left(\frac{z}{j}\right)^i\sum_{k_j+1}^{\infty} \left(\frac{j\beta q^k}{1-j\beta q^k}\right)^i\Big|<
		\sum_{i=1}^{\infty} \frac{1}{i}\left(\frac{|z|}{j}\right)^i \frac{j^i|\beta|^i q^{i(k_j+1)}}{1-q^i}
		<\sum_{i=1}^{\infty}\frac{1}{i}\left(\frac{|z|}{j}\right)^i\frac{1}{1-q}
	\ee
\end{proof}

\begin{proof}[Proof of Lemma \ref{conv_phi_series} ]
	 Consider the logarithm of $\rho^{(H_q, \beta)}_m$:
	 \be
	 \label{rholog}
		\log \rho^{(H_q, \beta)}_j= j\log \beta +\sum\limits_{l=1}^{j}\log H_q(l\beta) \,.
	 \ee
	Using Lemma~\ref{HqTaylorLemma} we can
	replace the sum with an  integral as follows
	 \be
	\sum\limits_{l=1}^j\log H_q(l\beta)=
	\int\limits_{\frac{1}{2}}^{j+\frac{1}{2}}\log H_q(z \beta) d z+
	\sum_{l=1}^{j}\sum_{i=1}^{\infty} \frac{1}{4^i (2i+1)!} \frac{d^{2i}\log H_q(z\beta)}{d z^{2i}}\Big|_{z=l}\,.
%\approx \frac{1}{2\beta \log q}\int\limits_{\log(-\beta)}^{\log(-j\beta)} \theta^2 e^{\theta} d\theta,
	 \ee
	Derivatives of the leading term of the expansion (\ref{HqAs}) are given dy
	\be
		\frac{d^n \log^2(-z)}{dz^n}= \frac{2(n-1)!}{z^n}\left(\log(-z)-\sum_{k=1}^{n-1}\frac{1}{k}\right)\,.
	\ee
	Consequently, there exist a positive constants $M_1$ and $M_2$ such that
	\be	
		\Big| \frac{d^n \log^2(\beta(l+z))}{n! d z^n}\Big| \Big|_{z=0}<\frac{M_1\log l+M_2\log n}{l^n}\,,n\geq 2
	\ee
	The correction from higher derivatives can be estimated as
	\be
		\sum_{l=1}^{j}\sum_{i=1}^{\infty} \frac{1}{4^i (2i+1)!} \Big|\frac{d^{2i}\log H_q(z\beta)}{d z^{2i}}\Big|\Big|_{z=l}
		<\sum_{l=1}^{\infty} \frac{2(M_1+M_2)\log l}{l^2}=O(1)
	\ee
	and is therefore bounded, which results in
	\be
	\sum\limits_{l=1}^j\log H_q(l\beta)=
		\int\limits_{j_q+\frac{1}{2}}^{j+\frac{1}{2}}\log H_q(z \beta) d z+O(1)\,.
	\ee
	Substituting the asymptotic expansion (\ref{HqAs}) into the last formula one gets
	 \be
	 	\log \rho^{(H_q, \beta)}_j=\frac{j\log^2(j)}{2\log q}+O(j\log(j)).
	\ee
	 Since $\log q <0$ 
	 \be
	 	\lim\limits_{m\rightarrow \infty}\rho^{(H_q, \beta)}_m=0
	\ee
	so the series (\ref{phi_k_series}) is absolutely convergent.
 \end{proof}

\begin{proof}[Proof of Lemma \ref{HqAsLemma} ]
	The ``counting'' function  
	 \be
		\eta(\zeta)=\log(1-e^{-2\pi i \zeta})
	 \ee
	 has the following useful properties:
	\bea
	\label{zcprop}
		\res\limits_{\zeta=n} \frac{d \eta(\zeta)}{d\zeta}&=&1\,,n\in \Zbb\,,\\
		|\eta(\zeta e^{-i\omega})|&=&\exp(-2\pi \zeta\sin(\omega))+O\left(e^{-4\pi\zeta \sin(\omega)}\right)\, \omega\in (0,\pi)\,,\zeta\in \Rbb\,,\zeta\rightarrow \infty\,\\
		\eta(-\zeta^{*})&=&\left(\eta(\zeta)\right)^{*}\,.
	\eea
	It is therefore straightforward to write down the following integral representation for $\log H_q(z)$:
	 \bea
	 \label{ireprho}
	 	\log H_q(z)=-\sum\limits_{m=0}^{\infty}\log(1-zq^m)&=&
		-\int\limits_{0}^{\infty}\log(1-z q^{\zeta})d\zeta \nonumber\\-
		\frac{\omega}{\pi}\log(1-z)&-&i \sum_{\sigma=\pm 1}\sigma \int \limits_{0}^{\infty}\left[\log(1-z q^{\zeta e^{-i\sigma \omega}})\partial_{\zeta}\eta(\zeta e^{-i\sigma\omega})\right] d\zeta\,,\quad
	\eea
	 where $0<\omega<\arctan(-\frac{(\pi-|\arg(z)|)\log q}{\log(-|z|)})$. 
	Integrating by parts and moving the contour of integration for the last integral through the singularities of $\log(1-z q^{\zeta})$  to $\omega=\frac{\pi}{2}$ one gets another integral representation
	\bea
	\label{ireppi2}
	 	\sum\limits_{m=0}^{\infty}\log(1-z q^m)&=&
		\int\limits_{0}^{\infty}\log(1-z\beta q^{\zeta})d\zeta\\
	&&+\frac{1}{2}\log(1-z)-\int \limits_{0}^{\infty}\sum_{\sigma=\pm 1}
\frac{\log q}{1-z^{-1}q^{-i\sigma \zeta}}
\log(1-e^{-2\pi\zeta}) d\zeta+S(-z)\,,\nonumber
	\eea
	where the term $S(-z)$ arises as a sum of the residues at the points
	\be
		\zeta_k=\frac{i\pi(2k-1)+\log(z)}{\log q}\,.
	\ee
	The function $S(z)$ is periodic in $\log(|z|)$. The sum in (\ref{SHqDef}) is uniformly convergent for $\arg(z)\in [-\frac{\pi}{2},\frac{\pi}{2}]$
 and therefore the function $S(-z)$ is  bounded in the left half plane.
	Since $|q^{i\zeta}|=1$ the last integral can be expanded into  Taylor series in $z^{-1}$ convergent for $|z|>1$.
	 Rewriting the first integral in (\ref{ireprho}) as
\bea
	-\log q \int\limits_{0}^{\infty}\log(1-z q^{\zeta}) d\zeta&=&
	\int\limits_{1}^{\infty}\log(1+\lambda^{-1}) \lambda^{-1} d\lambda+\int\limits_{0}^{1}\log(1+\lambda)\lambda^{-1} d\lambda\\
	&&-\int\limits_{0}^{-z^{-1}} \log(1+\lambda)\lambda^{-1}d\lambda-\frac{1}{2}\int\limits_{-z^{-1}}^{1}d \log^{2}\lambda
\eea
	we can expand it in $z$ to get the statement of the lemma.
\end{proof}

\begin{proof}[Proof of Lemma~\ref{AqAsLemma}]
	Using Binet's integral representation \cite{BE} for $\Gamma(z)$ and noticing that
	\be
		-\sum_{m=0}^{\infty}\left(z-\frac{1}{2}\right)\log(1-z\beta q^m)=\left(z-\frac{1}{2}\right)\log H_q(\beta z),
	\ee
	one gets, upon cancellation of like terms, 

	\bea
	\label{binrep}
		\log A_{H_q,k}(z)=&\&
		\log\left(\Gamma(1-k-z)H_q^{z-\frac{1}{2}}(\beta z)\right)
		+\hat{S}(z)\cr
		&\&-\sum\limits_{m=0}^{\infty}\int\limits_{0}^{\infty}
		\frac{1}{t}\left(\frac{1}{2}-\frac{1}{t}+\frac{1}{e^t-1}\right)e^{t\beta^{-1} q^{-m}}\left(e^{-tz}-1\right)dt.
  \eea
The factor $\left(\frac{1}{2}-\frac{1}{t}+\frac{1}{e^t-1}\right)$ is bounded by a positive function:
\be
	0<\left(\frac{1}{2}-\frac{1}{t}+\frac{1}{e^t-1}\right)\leq \frac{t}{12}.
\ee
Therefore, provided $\Re(z)>0$, 
\be
	\Big|\int\limits_0^{\infty}\left(\frac{1}{2}-\frac{1}{t}+\frac{1}{e^t-1}\right)\frac{e^{tq^{-k}\beta^{-1}-tz}}{t} d t\Big|<
	\int\limits_0^{\infty}\Big|e^{-tq^{-k}\beta^{-1}-tz}\Big| \frac{d t}{12}=
	\frac{1}{12(\Re(z)-q^{-k}\beta^{-1})}.
\ee
Furthermore, since
\be
	\sum_{k=0}^{\infty} \frac{1}{(z-q^{-k}\beta^{-1})}= -\frac{d \log(H_q(\beta z))}{d z},
\ee
taking into account the asymptotic behaviour of $H_q$ (\ref{HqAs}) gives
\be
\label{binetintas}
	\sum_{k=1}^{\infty}\int\limits_0^{\infty}\left(\frac{1}{2}-\frac{1}{t}+\frac{1}{e^t-1}\right)\frac{e^{tq^{-k}\beta^{-1}-tz}}{t} d t=O\left(\frac{\log z}{z}\right)
\ee
Combining (\ref{binrep}),(\ref{binetintas}) and using Lemmas~\ref{HqAsLemma}  and \ref{YAsLemma} we obtain
\be
	\log A_{H_q,k}(z)=z\log(H_q(\beta z))+O(z\log z)=\frac{z\log^2(-\beta z)}{2\log q}+O(z\log z).
\ee
\end{proof}

\begin{proof}[Proof of Lemma \ref{YAsLemma}]
	We repeat the steps leading to (\ref{ireppi2}) in the proof of the Lemma~\ref{HqAsLemma} in order to get the following integral representation:
\bea
\label{YIrep}
	\hat{S}(z)&=&\int\limits_{0}^{\infty}\left[\beta^{-1}q^{-\zeta}\log(1-z\beta q^{\zeta}) +z\right]d \zeta
	+\frac{1}{2\beta}\log(1-\beta z)+\frac{ z}{2}\\
	&+&\frac{\log q}{2\pi}\sum_{\sigma=\pm 1}
  \int\limits_{0}^{\infty}
	h(i\sigma \zeta ,z)
     \log\left(1 - e^{-2\pi\zeta}\right)d\zeta+\beta^{-1}\tilde{S}(-\beta z)\,,\nonumber
\eea
where
\be
	h(\theta,z)=-\frac{1}{\beta}q^{-\theta}\log(1 -z\beta q^{\theta}) -
       \frac{z}{1 -z\beta q^{\theta}}
\ee
and
\be
\label{Ysum}
	\tilde{S}(z)=
2\pi z\sum_{k=1}^{\infty}\frac
{(\sin(
     \frac{2\pi k \log(z)}{\log q}) - 
    \frac{2\pi k}{\log q}\cos(\frac{2\pi k\log(z)}{\log q}))}
{\sinh(\frac{2\pi^2 k}{\log q})(1 + \frac{4\pi^2 k^2}{(\log q)^2})}\,.
\ee
The sum $\tilde{S}$ is bounded by a linear function.
The last integral in (\ref{YIrep}) grows like $\log(z)$.
The leading contribution  therefore comes from the first integral, which can be computed analytically:
\be
\label{YLead}
\int\limits_{0}^{\infty}\left[\beta^{-1}q^{-\zeta}\log(1-z\beta q^{\zeta}) +z\right]d \zeta=
	\frac{z}{\log q}\left(1-
(1 -\frac{1}{\beta z})\log(1 - \beta z)\right)
\ee
Substituting  (\ref{Ysum}-\ref{YLead}) into (\ref{YIrep}) yields the statement of the lemma.
\end{proof}

\end{appendix}

%%%%%%%%%%%%%%%%%%%%% Acknowledgements %%%%%%%%%%%%%%%%%
 \bigskip
\noindent 
\small{ {\it Acknowledgements.} 
This work was partially supported by the Natural Sciences and Engineering Research Council of Canada (NSERC) 
and the Fonds de recherche du Qu\'ebec, Nature et technologies (FRQNT).  
\bigskip
\bigskip 
%%%%%%%%%%%%%%%%%%%%%%%%%%%%%%%%%%%

%%%%%%%%%%%%%%%% Bibliography %%%%%%%%%%%%%%%%

\newcommand{\arxiv}[1]{\href{http://arxiv.org/abs/#1}{arXiv:{#1}}}

\bigskip
\noindent

\end{document}